\documentclass[a4paper,12pt]{article}  
\pdfoutput=1
\usepackage{times}
\oddsidemargin=.46cm    
\textwidth=15cm
\textheight=22cm
\topmargin=-.3cm
\usepackage{amsmath,amssymb,amsthm}
\usepackage{hyperref}

\newcommand\commentout[1]{}
\def\sect#1\par{\vskip0pt plus.1\vsize\penalty-250
 \vskip0pt plus-.1\vsize\vskip7truemm
 \noindent{\large\bf#1}\par\nobreak
 \vskip3truemm{\parskip=0pt\noindent}}
\mathsurround=0pt
\sfcode`P=1000 
\sfcode`C=1000
\def\:{\colon}

\def\endproof{\hfill\strut\nobreak\hfill\tombstone\par\medbreak}
\def\tombstone{\hbox{\lower.4pt\vbox{\hrule\hbox{\vrule
  \kern7.6pt\vrule height7.6pt}\hrule}\kern.5pt}}

\def\nats{\mathbb{N}}

\def\emptyset{\epsilon}

\newdimen \einr
\einr1.5em
\def\abs#1#2\par{{\hangafter=1\hangindent=\einr
  \noindent\hbox to\einr{\ignorespaces#1\hfill}\ignorespaces#2\par}}
\def\bye{\end{document}}

   \newtheoremstyle{theorem}
     {\topsep}
     {\topsep}
     {\rmfamily}
     {}
     {\bf}
     {.}
     {.5em}
     {}
\theoremstyle{theorem}
\newtheorem{theorem}{Theorem}
\newtheorem{definition}[theorem]{Definition}
\newtheorem{lemma}[theorem]{Lemma}

\def\seq{\sigma}   
\def\perm{{\mathcal P}}%

\newcommand{\alg}[1]{\rm\texttt{#1}}
\def\A{{\hbox{\alg{A}}}}

\def\BIT{\hbox{\alg{BIT}}}
\def\COMB{\hbox{\alg{COMB}}}
\def\TIMESTAMP{\hbox{\alg{TIMESTAMP}}}
\def\MTF{\hbox{\alg{MTF}}}
\def\LMTF{\hbox{\alg{LMTF}}}
\def\FC{\hbox{\alg{FREQUENCY COUNT}}}
\def\SPLIT{\hbox{\alg{SPLIT}}}
\def\TRANSPOSE{\hbox{\alg{TRANSPOSE}}}

\def\ON{\hbox{\alg{A}}}
\def\OFF{\hbox{\alg{OPT}}}
\def\OPT{\hbox{\alg{OPT}}}
\def\OPTBAR{\ensuremath{\overline{\alg{OPT}}}}
\def\TS{\hbox{\alg{TS}}}
\def\calA{\mathcal A}

\def\eps{\varepsilon}

\def\prob{\mathop{\mathrm{prob}}}

\newcommand{\ignore}[1]{}

\def\M{{\hat{M}}}

\hyphenation {TIME-STAMP off-line }

\title{
Optimal Lower Bounds for Projective List Update Algorithms%
\thanks{We thank two referees for helpful comments. A
preliminary version of this paper appeared in \cite{AGS00}.}
}

\author{
Christoph Amb\"uhl%
\thanks{
Dalle Molle Institute for Artificial Intelligence (IDSIA),
6928 Manno,
Switzerland.
Supported by the Swiss National Science Foundation project
200020-122110/1
``Approximation Algorithms for Machine Scheduling Through Theory and Experiments
III'' and by Hasler Foundation Grant 11099.
\newline
Email:  christoph.ambuhl@googlemail.com} \and
 Bernd G\"artner%
\thanks{Institute
of Theoretical Computer Science, ETH Z\"urich, 8092 Z\"urich,
Switzerland. \newline Email: gaertner@inf.ethz.ch}
\and Bernhard von Stengel%
\thanks{Department of Mathematics,
London School of Economics, London WC2A 2AE, United Kingdom.
Email: stengel@nash.lse.ac.uk}
}

\date{March 8, 2011}

\begin{document}
\maketitle
\begin{abstract}
The list update problem is a classical online problem, with an
optimal competitive ratio that is still open, known to be
somewhere between $1.5$ and $1.6$.
An algorithm with competitive ratio $1.6$, the smallest
known to date, is \COMB, a randomized combination of \BIT\
and the \TIMESTAMP\ algorithm \TS.
This and almost all other list update algorithms, like \MTF, are {\it
projective\/} in the sense that they can be defined by
looking only at any pair of list items at a time.
Projectivity (also known as ``list factoring'')
simplifies both the description of the
algorithm and its analysis, and so far seems to be the only
way to define a good online algorithm for lists of arbitrary
length.
In this paper we characterize all projective list update
algorithms and show that their competitive ratio is never
smaller than $1.6$ in the partial cost model.
Therefore, \COMB\ is a best possible projective algorithm in
this model.

\strut

\noindent
\textbf{Keywords:}
linear lists, online algorithms, competitive analysis.

\strut

\noindent
\textbf{AMS subject classifications:}
68W27, 
68W40, 
68P05, 
68P10. 

\end{abstract}

\newpage
\section{Introduction}

The {\it list update problem\/} is a classical online problem
in the area of self-or\-ga\-ni\-zing data structures \cite{AW98}.
Requests to items in an unsorted linear list must be served
by accessing the requested item.
We assume the {\it partial cost model\/} where accessing the
$i$th item in the list incurs a cost of $i-1$ units.
This is simpler to analyze than the original {\it full cost model\/}
\cite{RWS94} where that cost is~$i$.
The goal is to keep access costs small by rearranging the items
in the list.
After an item has been requested, it may be moved free of charge
closer to the front of the list. This is called a {\it free exchange}.
Any other exchange of two consecutive items in the list incurs
cost one and is called a {\it paid exchange}.

An {\it online\/} algorithm must serve the sequence $\sigma$
of requests one item at a time, without knowledge of future requests.
An optimum {\it offline\/} algorithm knows the entire sequence~$\sigma$
in advance and can serve it with minimum cost $\OFF(\sigma)$.
If the online algorithm serves $\sigma$ with cost $\ON(\sigma)$,
then it is called {\it $c$-competitive\/} if for a suitable
constant~$b$
\begin{equation}
\label{c-comp}
\ON(\sigma)\le c\cdot \OFF(\sigma)+ b
\end{equation}
for all request sequences $\sigma$ and all initial list states.
The infimum over all $c$ so that (\ref{c-comp}) holds for
$\ON$ is also called the {\it competitive ratio} of~$\ON$\,. If
the above inequality holds even for $b=0$, the algorithm
$\ON$ is called {\em strictly $c$-competitive} \cite{BE98}.

The {\it move-to-front\/} rule \MTF, for example, which moves each
item to the front of the list after it has been requested, is
strictly 2-competitive \cite{RWS94,ST85}. This is also the best
possible competitiveness for any deterministic online algorithm
for the list update problem~\cite{RWS94}. Another 2-competitive
deterministic algorithm is \TS, which is the simplest member of
the \TIMESTAMP\ class due to Albers \cite{A95}. \TS\ moves the
requested item $x$ in front of all items which have been requested
at most once since the last request to $x$.
\smallskip

As shown first by Irani \cite{I91},
\emph{randomized} algorithms can perform better on average.
Such an algorithm is called $c$-competitive if
\[
E[\ON(\sigma)]\le c\cdot \OFF(\sigma)+ b,
\]
for all $\seq$ and all initial list states, where the expectation
is taken over the randomized choices of the online algorithm;
this definition implies that the request sequences $\seq$ are
generated by an {\em oblivious adversary\/} that does not
observe the choices of the online algorithm.
The best randomized list update algorithm known to date is the
$1.6$-competiti\-ve algorithm \COMB\ \cite{ASW95}. It serves the
request sequence with probability $4/5$ using the algorithm \BIT\
\cite{RWS94}. With probability $1/5$, \COMB\ treats the request
sequence using \TS.

Lower bounds for the competitive ratio of randomized algorithms
are harder to find; the first nontrivial bounds are due to Karp
and Raghavan, see the remark in \cite{RWS94}. In the partial cost
model, a lower bound of~1.5 is easy to find as only two items are
needed. Teia \cite{T93} generalized this idea to prove the same
bound in the full cost model, which requires long lists. The
authors~\cite{AGS01} showed a lower bound of 1.50084
(improved to 1.50115 in \cite[p.~38]{A02}) for lists
with five items in the partial cost model, using game trees and a
modification of Teia's approach.
The optimal competitive ratio for
the list update problem (in the partial cost model) is therefore
between $1.50115$ and $1.6$, but the true value is as yet unknown.

With the exception of Irani's algorithm \SPLIT\ \cite{I91},
all the specific list update algorithms mentioned above are
\emph{projective}, meaning that the relative order of any two
items $x$ and $y$ in the list after a request sequence $\seq$ only
depends on the initial list state and the requests to $x$ and $y$
in $\seq$. The simplest example of a projective algorithm is \MTF.
In order to determine whether $x$ is in front of $y$ after $\seq$,
all that matters is whether the last request to $x$ was before
the last request to $y$. The requests to other items are
irrelevant.

A simple example of a non-projective algorithm is \TRANSPOSE,
which moves the requested item just one position further to the
front.

Projection to pairs of items, also known as ``list factoring'', is the main
tool for analyzing list update algorithms.
It has also been applied recently to other performance models of list
processing \cite {DEL09,EKL10}.

\paragraph{Our results.}
The main result of this paper (Theorem~\ref{t-irreg}) states that $1.6$ is the
best possible competitive ratio attainable by a projective algorithm.
As a tool for proving it, we develop an explicit characterization of
deterministic projective algorithms.

These results are significant in two respects. First, they show
that the successful approach of combining existing projective
algorithms to obtain improved ones has reached its limit with the
development of the \COMB\ algorithm.  New and better algorithms
(if they exist) have to be non-projective, and must derive from
new, yet to be discovered, design principles.

Second, the characterization of projective algorithms is a step
forward in understanding the structural properties of list update
algorithms. With this characterization, the largest and so far
most significant class of algorithms appears in a new, unified
way.

The complete characterization of projective algorithms turns out
to be rather involved. However, there is a simple subclass of
projective algorithms which already covers all projective
algorithms that can be expected to have a good competitive
ratio. We call them {\em critical request
algorithms\/}.
A list update algorithm is completely described by
the list state after a request sequence $\seq$ has
been served;
this can be done because we can
assume that all changes in the list state are due to paid
exchanges, as explained in further detail at the beginning
of the next section.
For critical request algorithms, the {\em unary
projections\/} to individual items suffice to describe that
list state.
For a request sequence $\seq$ and list item $x$, deleting
all requests to other items defines the unary projection
$\seq_x$, which is an $i$-fold repetition of requests
to~$x$, written as~$x^i$, for some $i\ge0$.
In Section~5 it will be necessary
to consider unary projections $x^0$ and $y^0$ of length zero
as different if the items $x$ and $y$ are different;
for the moment, this distinction does not matter.
With $L$ as the set of list items, let the set of these
unary projections be
\begin{equation}
\label{defU}
U=\{x^i\mid x\in L,~i\ge0\}.
\end{equation}

\begin{definition}[{\bf Critical request algorithm}]
\label{d-crit}
  \strut\\
A {\em deterministic} critical request algorithm is defined
by a function
\[
F:U\to \{0,1,2,\ldots\},
\quad
\hbox{ with }F(x^i)\le i
\hbox{ for any }x\in L,~ i\ge 0.
\]
We call the $F(\seq_x)$th request to $x$ in $\seq$ the
{\em critical request\/} to $x$.
If $F(\seq_x)$ is zero (for example if $\seq_x$ is the {\em empty
sequence}~$\emptyset$), then $x$ has no critical request.
In the list state after $\seq$, all items with a critical
request are grouped together in front of the items without
critical request.
The items with critical requests are ordered according to
the time of the $F(\seq_x)$th request to $x$ in $\seq$.
The later a critical request took place in the sequence, the
closer the item is to the front.
The items without critical request are placed at the end of
the list according to their order in the initial sequence.
A {\em randomized critical request algorithm\/} is a
discrete probability distribution on the set of deterministic
critical request algorithms.
\endproof
\end{definition}

As an example, consider the online algorithm for three items $a$, $b$,
and $c$ with the function $F$ shown in the following table
for requests up to four items.
\begin{center}
\begin{tabular}{c|ccccc}
$i$       & ~~0~~ &   ~~1~~  &  ~~2~~   & ~~3~~  & ~~4~~ \\
\hline
$F(a^i)$   & 0 &     1    &    0     &   2    &   2   \\
$F(b^i)$   & 0 &     0    &    2     &   2    &   4   \\
$F(c^i)$   & 0 &     1    &    2     &   2    &   2   \\
\end{tabular}
\end{center}
In the rest of this paper, list states are written as $[x_1
x_2\ldots x_n]$ where $x_1$ is the item at the front of the list.
Let the initial list state be $[abc]$.
Consider the list state after $\seq=abbcab$.
We have $F(\seq_a)=F(aa)=0$, hence $a$ does not have a
critical request.
For $b$ we have $F(\seq_b)=F(bbb)=2$, therefore the second
request to $b$ in $\seq$ is its critical request.
For $c$ we have $F(\seq_c)=F(c)=1$.
Thus after $\seq$, the list state is $[cba]$.  If we augment
$\seq$ by another request to $a$, item $a$ moves
to the front, because its critical request is the second.

Algorithms based on critical request functions are clearly
projective, since the relative order of any pair of items just
depends on the relative order of the requests to $x$ and $y$ in
$\seq$ and the relative order of $x$ and $y$ in the initial list
state.

In good online algorithms, the critical requests are very
recent, like in \MTF\ which is described by the critical
request function $F(x^i)=i$ for all items $x$.
We define the critical request \emph{relative} to the current
position by
\begin{equation}
\label{relcrit}
f(x^i)=i-F(x^i),
\end{equation}
from which the critical request function is recovered
as $F(x^i)=i-f(x^i)$.
Then \MTF\ is given by $f(x^i)=0$.
Algorithm \TS\ is described by $f(x^i)=1$ for all items
and all $i>0$ (and $f(\emptyset)=0$).
Because the \BIT\ algorithm \cite{RWS94} is randomized, its
critical requests are also randomized.
For every item $x$, its relative critical request function can
be written as $f(\emptyset)=0$ and, for $i>0$,
\begin{equation}
f(x^i)= (b_x +i )\mod 2
\end{equation}
where $b_x\in\{0,1\}$ is chosen once uniformly at random;
so for a list with $n$ items, \BIT\ is the uniform
distribution over $2^n$ different deterministic algorithms.
For \BIT, the critical request is the last or the
second-to-last request with equal probability.

The structure of the paper is as follows. In the next section, we
explain projective algorithms in more detail and how they can be
analyzed.
In Section~3, we give a characterization of so-called
$M$-regular projective algorithms, followed by the lower
bound of~$1.6$ for this important class of algorithms in
Section~4.
In Section~5, we characterize projective algorithms completely.
We extend the lower bound to the full class in Section~6.

\section{Projective Algorithms}\label{in:pa}

In order to characterize list update algorithms, we
first simplify their formal definition.
The standard definition (of the partial cost model)
considers a list state and a sequence of requests.
For each request to one of the items of the list,
the item can be accessed with access cost $i-1$ if
the item is in position~$i$, and then moved free of
charge closer to the front.
In addition, paid exchanges are allowed which can be applied
both before and after accessing the item, at a cost of one unit
for exchanging any two consecutive items.

Contrary to the claim of \cite[Theorem~3]{ST85},
paid exchanges may strictly improve costs.
For example, let the initial list state be $[abc]$ and $\seq=cbbc$.
Then an optimal algorithm moves $a$ behind $b$ and
$c$ before the first request to $c$.
This requires paid exchanges.

In order to simplify the description of a list update
algorithm, we assume that it operates using only paid
exchanges, as follows:
The list is in a certain state.
The algorithm is informed about the next request, and then
performs a number of paid exchanges, and is charged for
their cost.
It then accesses the requested item at cost $i-1$ when the
item is the $i$th item in the list, without any further
changes to the list.
This mimicks free exchanges as well:
Instead of first paying $k$ units in order to access item
$x$ and then moving it at no charge $t$ positions closer to
the front, one can first move the item $t$ positions forward
and then access the item at cost $k-t$.
In both cases, one pays exactly $k$ units.

The above description ignores paid exchanges immediately
before learning the next request; if the algorithm performs
them after the request, it has only more information.
In addition, paid exchanges also allow the transposition of
items at unrealistic low costs that are behind the requested
item in the list.
This does not matter for our lower bound considerations.

The above considerations lead to a simplified but still
equivalent model of list update algorithms:
Any deterministic online algorithm \A\ is specified by a function
\[
S^{\A}:\Sigma\to \mathcal{L}.
\]
Here, $\Sigma$ denotes the set of finite request sequences
(including the empty sequence $\emptyset$),
and $\mathcal{L}$ denotes the set of the $n!$ states the
list of $n$ items can attain.
By definition, $S^{\A}(\seq)$ is the list state after
the last request of $\seq$ has been served by algorithm~$\A$.

Consider a request sequence $\seq$ and assume it is followed
by a request to item~$x$, the extended sequence denoted by~$\seq x$.
Then the cost of serving request~$x$ is defined by:
the cost of re-arranging the list from state $S^{\A}(\seq)$
to $S^{\A}(\seq x)$ by paid exchanges, plus the cost of accessing
$x$ in state $S^{\A}(\seq x)$.

Using this notation, the initial list state can be denoted by
$S^{\A}(\emptyset)$.
We will omit the superscript \A\ in $S^{\A}(\seq)$ when
the algorithm used is determined by the context.

In order to describe projective algorithms, we have to introduce
the concept of {\em projections\/} of request sequences and list
states. Let a request sequence $\seq$ be given and fix a pair of
items $x,y$. The projection $\seq_{xy}$ of $\seq$ to $x$ and $y$
is the request sequence $\seq$ where all requests which are not to
$x$ or $y$ are removed. Similarly, $\seq_x$ is $\seq$ with all
requests other than those to~$x$ removed.

Given a list state $L$, the projection to $x$ and $y$ is obtained
by removing all items except for $x$ and $y$ from the list.
This is denoted by $L_{xy}$.

\begin{definition} \label{in:def:proj}
Let $S_{xy}(\seq)$ be the projection of $S(\seq)$ to $x$ and $y$.
A deterministic algorithm \A\ is called {\em projective}
if for all pairs of items $x,y$ and all request sequences~$\seq$
\begin{equation}\label{sxy}
S_{xy}(\seq)=S_{xy}(\seq_{xy}).
\end{equation}
A randomized algorithm is projective if all deterministic
algorithms that it chooses with positive probability are projective.

For any list update algorithm $\A$,
define the {\em projected cost\/} $\A_{xy}(\seq)$ that
$\A$ serves a request sequence~$\seq$, projected to the pair
$x,y$, as follows:
Consider all requests $z$ in $\seq$ with corresponding
prefix $\seq'z$ of $\seq$.
Then $\A_{xy}(\seq)$ is the number of times where
$S_{xy}(\seq')$ and $S_{xy}(\seq'z)$ differ (which counts
the necessary paid exchanges of $x$ and $y$; this may happen
even if $z\not\in\{x,y\}$ in case $\A$ is not projective),
plus the number of times where $z=x$ and
$S_{xy}(\seq'z)=[yx]$ or $z=y$ and $S_{xy}(\seq'z)=[xy]$.
\endproof
\end{definition}

Thus, an algorithm is projective if the relative position of
any pair of items depends only on the initial list state and the
requests to $x$ and $y$ in the request sequence.

Projective algorithms have a natural generalization, where
the relative order of any $k$-tuple of list items depends
only on the requests to these $k$ items. It turns out that for
lists with more than $k$ items, only projective algorithms satisfy
this condition. This follows from the fact that, for example
for $k=3$, $S_{xyz}(\sigma)=S_{xyz}(\sigma_{xyz})$
(so the relative position of $x$ and $y$ does not depend
on requests to~$w$), and
$S_{xyw}(\sigma)=S_{xyw}(\sigma_{xyw})$
(so the relative position of $x$ and $y$ does not depend
on requests to~$z$), which implies that
$S_{xy}(\seq)$ depends only on $\seq_{xy}$.

Already in \cite{BM85}, Bentley and McGeoch observed that \MTF\ is
projective:
Item $x$ is in front of $y$ if and only if $y$ has
not been requested yet or if the last request to $x$ took
place after the last request to $y$.

With the exception of Irani's \SPLIT\ algorithm \cite{I91},
projective algorithms are the only family of algorithms that
have been analyzed so far, typically using the following
theorem, for example in \cite{A95,ASW95,BE98}.

\begin{theorem}\label{in:thmpa}
If a (strictly) projective algorithm is
$c$-competitive on lists with two items, then it is also (strictly)
$c$-competitive on lists of arbitrary length.
\end{theorem}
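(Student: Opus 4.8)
The plan is to decompose both the online cost and the offline cost additively over pairs of items, and then reduce the competitiveness inequality on the full list to the already-assumed inequality on each two-element sublist. The crucial structural fact we exploit is that in the partial cost model, the access cost of a request to $x$ in a list state $L$ equals the number of items preceding $x$ in $L$, and this is exactly $\sum_{y\neq x}[\,y \text{ is in front of }x\text{ in }L\,]$; similarly, the cost of a paid exchange of two consecutive items $x,y$ affects only the relative order of the pair $\{x,y\}$. Hence for any algorithm (online or offline) we can write the total cost on $\seq$ as a sum over unordered pairs $\{x,y\}$ of the ``cost restricted to the pair'', where for a given pair this restricted cost counts the inversions relative to the pair created during accesses plus the paid exchanges performed between $x$ and $y$.

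First I would make this pair-decomposition precise for the online algorithm \A. Since \A\ is projective, the relative order of $x$ and $y$ after any prefix of $\seq$ is exactly the relative order produced by \A\ run on the projected sequence $\seq_{xy}$ on the two-item list obtained by projecting the initial state. Therefore the total contribution of the pair $\{x,y\}$ to $\ON(\seq)$ — accesses to $x$ or $y$ counted only against the other, plus paid exchanges swapping $x$ and $y$ — equals precisely $\ON_{xy}(\seq_{xy})$, the cost \A\ incurs on the two-item instance $\seq_{xy}$. Summing over all $\binom n2$ pairs gives $\ON(\seq)=\sum_{\{x,y\}}\ON_{xy}(\seq_{xy})$. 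This is the step where projectivity is indispensable: without it the pair contribution to \A's cost need not be expressible as a function of $\seq_{xy}$ alone.

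Next I would handle the offline side, which goes the opposite direction and only needs an inequality. Fix an optimal offline algorithm for the full instance $\seq$, achieving cost $\OFF(\seq)$. Projecting its behaviour to the pair $\{x,y\}$ yields \emph{some} (not necessarily optimal) algorithm for the two-item instance $\seq_{xy}$, whose cost is at least $\OFF(\seq_{xy})$; moreover, by the same inversion-counting identity, the sum over all pairs of these projected costs is at most $\OFF(\seq)$ (it could be strictly less, since paid exchanges on the full list that are ``wasted'' still count). Hence $\sum_{\{x,y\}}\OFF(\seq_{xy})\le \OFF(\seq)$. Combining with the two-item competitiveness hypothesis $\ON_{xy}(\seq_{xy})\le c\cdot\OFF(\seq_{xy})+b$ applied to each pair, we get
\[
\ON(\seq)=\sum_{\{x,y\}}\ON_{xy}(\seq_{xy})\le c\sum_{\{x,y\}}\OFF(\seq_{xy})+\binom n2 b\le c\cdot\OFF(\seq)+\binom n2 b,
\]
which is the desired $c$-competitiveness with additive constant $\binom n2 b$; in the strict case $b=0$ and the constant vanishes, giving strict $c$-competitiveness.

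The main obstacle, and the part deserving the most care, is justifying the two cost identities — that \A's cost really splits as an exact sum of pair costs, and that the offline cost dominates the sum of projected pair costs. The access-cost part is an easy counting identity, but the bookkeeping for paid exchanges needs the observation (already established in this section) that we may assume all list changes are effected by paid exchanges performed just before an access; one must check that a paid exchange of consecutive items $x,y$ contributes exactly one unit to the pair $\{x,y\}$ and nothing to any other pair, and that the sequence of such exchanges realizing the transition $S^{\A}(\seq)\to S^{\A}(\seq x)$ can be chosen so that, projected to any pair, it realizes the corresponding two-item transition without extra swaps — here again projectivity guarantees consistency across pairs. For the randomized statement, one simply takes expectations: a randomized projective algorithm is a distribution over deterministic projective ones, linearity of expectation passes the pair-decomposition through, and the two-item bound is assumed in expectation.
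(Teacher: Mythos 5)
Your proposal is correct and follows essentially the same route as the paper's proof: an exact pairwise decomposition of the online cost, the identity $\A_{xy}(\seq)=\A_{xy}(\seq_{xy})$ from projectivity, the inequality $\sum_{\{x,y\}}\OPT_{xy}(\seq_{xy})\le\OPT(\seq)$ obtained by projecting the optimal offline schedule to each pair, and then summing the two-item competitiveness bounds (with additive constant $\sum_{\{x,y\}}b_{xy}$, vanishing in the strict case). No substantive differences to report.
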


\begin{proof}
Consider first an arbitrary list update algorithm $\A$.
Let $L$ be the set of list items.
Then
\begin{equation}\label{in:anyAxy}
\A(\seq)
=\sum_{\{x,y\} \subseteq L} \A_{xy}(\seq),
\end{equation}
because the costs $\A(\seq)$ are given by the update costs
for changing $S(\seq')$ to $S(\seq'z)$, which is the sum of
the costs of paid exchanges of pairs of items, plus the cost
of accessing $z$ in state $S(\seq'z)$.

For a projective algorithm $\A$ the relative behavior of a
pair of items is, according to~(\ref{sxy}), independent
of the requests to other items.
It is therefore easy to see that $\A_{xy}(\seq) = \A_{xy}(\seq_{xy})$
for projective algorithms:
Because $\A$ is projective, $\A_{xy}(\seq_{xy})$ is also the
cost of $\A$ for serving $\seq_{xy}$ on the two-item list
containing $x$ and $y$ starting from $S_{xy}(\emptyset)$.

For the algorithm $\OPT$, the term $\OPT_{xy}(\seq_{xy})$
is the cost of optimally serving $\seq_{xy}$ on the two-item
list $S_{xy}(\emptyset)$.
Hence, $\OPT_{xy}(\seq) \ge \OPT_{xy}(\seq_{xy})$.
Then
\begin{equation}\label{in:OPTxy}
\OPT(\seq)
=\sum_{\{x,y\} \subseteq L} \OPT_{xy}(\seq)
\ge \sum_{\{x,y\} \subseteq L} \OPT_{xy}(\seq_{xy})
=: \OPTBAR(\seq).
\end{equation}

Let $\A$ be a projective algorithm that is $c$-competitive on two items.
Then for every pair of items $x,y$ there is a constant
$b_{xy}$ such that for all $\seq$
\[
\A_{xy}(\seq_{xy})\le c\cdot\OPT_{xy}(\seq_{xy}) +b_{xy}.
\]
Then
\begin{eqnarray*}
\A(\seq)&=&\sum_{\{x,y\} \subseteq L} \A_{xy}(\seq_{xy})\\
&\le& \sum_{\{x,y\} \subseteq L} \left( c\cdot\OPT_{xy}(\seq_{xy}) +b_{xy} \right)\\
&\le& c\cdot\OPTBAR(\seq) + \sum_{\{x,y\} \subseteq L} b_{xy}\\
&=& c\cdot\OPTBAR(\seq) + b \\
&\le& c\cdot\OPT(\seq) + b.
\end{eqnarray*}
For the strict case, just set all $b_{xy}:=0$.
\end{proof}

Not all algorithms are projective.
Let \LMTF\ be the algorithm that moves the requested item
$x$ in front of all items which have not been requested
since the previous request to $x$, if there has been such a
request.

It is easy to prove that on lists with two items, combining
\LMTF\ and \MTF\ with equal probability would lead to a
1.5-competitive randomized algorithm.
Obviously, if \LMTF\ was projective, this bound would hold
for lists of arbitrary length.

However, \LMTF\ is not projective.
This can be seen from the request sequence $\seq=baacbc$
with initial list $L_0=[abc]$.
It holds that $S^{\LMTF}(\seq)=cab$, whereas
$S^{\LMTF}(\seq_{bc})=S^{\LMTF}(bcbc)=bca$.
Hence $ S_{bc}^{\LMTF}(\seq)\not=S_{bc}^{\LMTF}(\seq_{bc})$.

\section{Critical Requests and $M$-regular Algorithms}
\label{s-crit}

In this section, we provide technical preliminaries for our
results, and introduce $M$-regular algorithms, which move an
item to the front of the list when it has been requested $M$
times in succession.

Throughout this section, we consider deterministic projective list
update algorithms.
In order to refer to the individual requests to an item $x$,
we write unary projections as
\[
x^i=x_{(1)}x_{(2)}\ldots x_{(i)},
\]
that is, $x_{(q)}$ is the $q$th request to $x$ in $\seq$ if
$\seq_x=x^i$, for $1\le q\le i$.

Let $\perm(\seq)$ be the set of all permutations of the
sequence $\seq$.
In particular, $\perm(x^iy^j)$ consists of all sequences
with $i$ requests to $x$ and $j$ requests to~$y$.

{\em Swapping} two requests $x_{(q)}$ and $y_{(l)}$ in a
request sequence $\seq$ means that
$x_{(q)}$ and $y_{(l)}$, which are assumed to be adjacent,
change their position in $\seq$.
If two requests are not adjacent, they cannot be swapped.

\begin{definition}
\label{d-agile}
Consider a deterministic projective list update algorithm $\A$.
A pair of unary projections $x^i$, $y^j$ is called {\em agile}
 if there exist two request sequences
$\tau$ and $\tau'$ in $\perm(x^iy^j)$ with
$S_{xy}(\tau)=[xy]$ and $S_{xy}(\tau')=[yx]$.

We call a pair of requests $x_{(q)}$, $y_{(l)}$ an
{\em agile pair of $\seq$} if $x_{(q)}$ and $y_{(l)}$ are
adjacent in $\seq$ and so that $\seq'$ obtained by swapping
$x_{(q)}$ and $y_{(l)}$ in $\seq$ gives $S_{xy}(\seq')\ne S_{xy}(\seq)$.
\endproof
\end{definition}

Clearly, if $x^i$ and $y^j$ are agile, then there exists an
agile pair in at least one sequence belonging to
$\perm(x^iy^j)$.

\begin{lemma}
\label{l-adjacentinlist}
If $x_{(q)}$, $y_{(l)}$ is an agile pair of $\seq$, then $x$
and $y$ are adjacent in $S(\seq)$.
\end{lemma}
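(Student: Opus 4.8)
The plan is to argue by contradiction: suppose $x_{(q)}$, $y_{(l)}$ is an agile pair of $\seq$ but $x$ and $y$ are not adjacent in $S(\seq)$. By projectivity (equation~(\ref{sxy})), $S_{xy}(\seq)=S_{xy}(\seq_{xy})$, so we may work entirely with the two-item projection $\seq_{xy}$, a sequence in $\perm(x^iy^j)$ in which $x_{(q)}$ and $y_{(l)}$ are adjacent. Let $\tau$ be $\seq_{xy}$ and let $\tau'$ be the sequence obtained from $\tau$ by swapping $x_{(q)}$ and $y_{(l)}$. By the definition of an agile pair, $S_{xy}(\tau)\ne S_{xy}(\tau')$; since these are both states of a two-item list, one must be $[xy]$ and the other $[yx]$.

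The heart of the argument is to track the cost, to any list update algorithm serving the full sequence $\seq$, of the paid and access operations on the pair $\{x,y\}$, using the projected-cost bookkeeping $\A_{xy}(\seq)$ introduced in the proof of Theorem~\ref{in:thmpa}. The key observation is that if $x$ and $y$ are not adjacent in $S(\seq)$, then at the moment just before the critical adjacent pair $x_{(q)},y_{(l)}$ is processed — more precisely, in the relevant prefix — one of $x,y$ sits strictly between... actually the cleaner route is this: I would compare the two runs of $\A$ (or rather of $S(\cdot)$) on the prefixes $\seq'$ and $\seq''$ of $\seq$ up to and including the pair, where $\seq''$ is $\seq$ with the two requests swapped. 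Since the only difference between $\seq$ and the swapped sequence is the order of two adjacent requests to $x$ and $y$, and since changing $S_{xy}$ at that point forces the relative order of $x$ and $y$ to flip, there must be a paid exchange of $x$ past $y$ occurring exactly at that step; but a single paid exchange of consecutive items only swaps $x$ with its immediate list-neighbour, so $x$ and $y$ must have been adjacent in $S(\seq')$ right before, and hence adjacent in $S(\seq)$ right after, contradiction with the assumption. The subtlety is that between the two adjacent requests and the end of $\seq$ there are no further requests to $x$ or $y$ (they were adjacent in $\seq$, so after the pair comes a block of other items, then the sequence ends at $\seq$), so whatever relative order of $x,y$ is established at the pair persists to $S(\seq)$; thus the flip in $S_{xy}(\seq)$ induced by the swap is entirely localized at that one step.

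The main obstacle I anticipate is making precise the claim that a change in the relative order of $x$ and $y$ at a single request forces them to be list-adjacent at that step. One must rule out the possibility that the algorithm reorders $x$ and $y$ via a sequence of paid exchanges that also move intervening items; but since a paid exchange swaps only two consecutive items, for the relative order of $x$ and $y$ to change there must be at least one paid exchange that swaps $x$ directly with $y$, and that is only possible if they are consecutive in the list at that moment — which is exactly list-adjacency in $S(\seq')$, and since no request to $x$ or $y$ follows the pair in $\seq$, also in $S(\seq)$. Formalizing "there must be a paid exchange swapping $x$ with $y$ directly" requires tracking the sign of the relative order as a function of the intermediate list states and invoking that each elementary paid exchange changes the relative order of exactly one pair; this is the one place where a short but careful combinatorial argument is needed rather than a one-line appeal to earlier results.
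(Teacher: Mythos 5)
Your argument has a genuine gap, and in fact the reduction you begin with discards exactly the information the lemma is about. The claim concerns adjacency of $x$ and $y$ in the full list state $S(\seq)$, i.e.\ the assertion that no third item $z$ lies between them; this cannot be decided by ``working entirely with the two-item projection $\seq_{xy}$'', since that projection says nothing about where the other items sit. Two further steps fail concretely. First, your claim that no requests to $x$ or $y$ occur after the agile pair is unjustified: Definition~\ref{d-agilepair} only requires $x_{(q)}$ and $y_{(l)}$ to be adjacent \emph{to each other} in $\seq$; the sequence may contain later requests to $x$ and to $y$, so the effect of the swap need not be localized at that step. Second, even where the dynamic argument applies, it establishes adjacency at the wrong moment: comparing the run on $\seq$ with the run on the swapped sequence does not exhibit a paid exchange inside either single run (in one run $x$ may simply already be in front of $y$ and never be moved), and even if in one run $x$ and $y$ are list-adjacent at the instant their relative order is set, that adjacency need not persist to $S(\seq)$ --- later requests to other items can cause the algorithm to move a third item in between $x$ and $y$ without ever changing their relative order.

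The paper's proof is instead a static comparison of final states that crucially uses projectivity for pairs involving a \emph{third} item: letting $\seq'$ be $\seq$ with $x_{(q)},y_{(l)}$ swapped, every projection $\seq_{st}$ with $\{s,t\}\neq\{x,y\}$ is unchanged by the swap, hence $S_{st}(\seq)=S_{st}(\seq')$ for all such pairs by~(\ref{sxy}), while $S_{xy}(\seq)\neq S_{xy}(\seq')$ by agility. If some $z$ lay strictly between $x$ and $y$ in $S(\seq)$, then the $x$--$z$ and $y$--$z$ orders would be the same in $S(\seq')$, which forces the $x$--$y$ order to be the same as well, a contradiction. So the adjacency conclusion comes precisely from the pairs $\{x,z\}$ and $\{y,z\}$ that your reduction to $\seq_{xy}$ throws away; you would need to reinstate that part of the projectivity hypothesis to close the argument.
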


\begin{proof}
Let $\seq'$ be $\seq$ with $x_{(q)}$ and $y_{(l)}$ swapped.
Then $S_{xy}(\seq)\not=S_{xy}(\seq')$ and
$S_{st}(\seq)=S_{st}(\seq')$ for all
$\{s,t\}\subseteq L$ except $\{x,y\}$.
But this is possible only if $x$ and $y$ are adjacent in
$S(\seq)$.
\end{proof}

\begin{definition}
For every $x^i\in U$ let $R(x^i)$ be the set defined as follows:
$x_{(q)} \in R(x^i)$ if and only if there exists $y_{(l)}$
and $\seq$ with $\sigma_x=x^i$
such that $x_{(q)},y_{(l)}$ is an agile pair of~$\seq$.
\endproof
\end{definition}

\begin{lemma}
\label{l-unique}
Let $x^i$ be unary projection and suppose that $x^i$ forms
agile pairs involving at least two distinct items.
Then $|R(x^i)|=1$.
\end{lemma}

\begin{proof}
Obviously, $|R(x^i)|>0$.
Suppose that $|R(x^i)|>1$;
we will show that this leads to a contradiction.
Then there are two distinct items $y,z$ and
a sequence $\tau\in \perm(x^iy^j)$ with an
agile pair $x_{(q)},y_{(l)}$, and similarly
$\lambda\in \perm(x^iz^k)$ with an agile pair
$x_{(q')},z_{(m)}$ with $q\not=q'$
(if $q=q'$ for all choices
of $y_{(l)}$ and $z_{(m)}$, then $|R(x^i)|=1$).
We insert $k$ requests to $z$ into $\tau$, but not between
$x_{(q)}$ and $y_{(l)}$, to create a sequence $\seq$ with
$\seq_{xy}=\tau$ and $\seq_{xz}=\lambda$ in which both
$x_{(q)},y_{(l)}$ and $x_{(q')},z_{(m)}$ are adjacent pairs.

Swap the agile pair $x_{(q)},y_{(l)}$ in $\seq$ to obtain
$\seq'$ with $\{S_{xy}(\seq),S_{xy}(\seq')\}=\{[xy],[yx]\}$.
We have $\seq_{xz}=\seq'_{xz}=\lambda$.
Suppose that $S_{xz}(\seq)=[zx]$ and $S_{xy}(\seq)=[xy]$
(and hence $S_{xyz}=[zxy]$), or that $S_{xz}(\seq)=[xz]$ and
$S_{xy}(\seq)=[yx]$ (and hence $S_{xyz}=[yxz]$), otherwise
exchange $\sigma$ and~$\sigma'$.
Now consider the sequence $\sigma''$ obtained from $\seq$ by
swapping both agile pairs $x_{(q)},y_{(l)}$ and
$x_{(q')},z_{(m)}$.
This reverses the three-element list $S_{xyz}$, that is,
$\{S_{xyz}(\seq),S_{xyz}(\seq'')\}=\{[zxy],[yxz]\}$,
so that $S_{yz}(\seq)\ne S_{yz}(\seq'')$, but
$\seq_{yz}=\seq''_{yz}$, which contradicts the projectivity
of the algorithm with respect to $y$ and~$z$.
\end{proof}

\begin{lemma}\label{l-beforeafter}
If $x_{(q)},y_{(l)}$ is an agile pair in $\lambda\in\perm(x^iy^j)$
and $|R(x^i)|=1$ and $|R(y^j)|=1$, then
the only swap of requests that can change the relative order
of $x$ and $y$ in a request sequence in $\perm(x^iy^j)$
is swapping $x_{(q)}$ and $y_{(l)}$, and this changes
$S_{xy}(\seq)$ in any such sequence $\seq$ where
$x_{(q)}$ and~$y_{(l)}$ are adjacent.
\end{lemma}

\begin{proof}
Only $x_{(q)}$ and $y_{(l)}$ can be swapped to affect the
order of $x$ and $y$ because $|R(x^i)|=|R(y^j)|=1$.
If the lemma does not hold, then there exists a sequence
$\seq$ in $\perm(x^iy^j)$ in which we can swap $x_{(q)}$ and
$y_{(l)}$ to obtain $\seq'$ with
$S_{xy}(\seq)=S_{xy}(\seq')$.
Then we can obtain any sequence in $\perm(\seq)$
by successively transposing adjacent requests,
starting from either $\seq$ or $\seq'$, without ever swapping
$x_{(q)}$ and $y_{(l)}$.
Thus, the relative order of $x$ and $y$ would be the same
for all request sequences in $\perm(x^iy^j)$.
But we know that swapping $x_{(q)}$ and $y_{(l)}$ changes
$S_{xy}(\lambda)$.
This is a contradiction.
\end{proof}

In this and the next section, we consider online list update
algorithms that move an item to the front of the list
after sufficiently many consecutive requests to that item.
This behavior is certainly expected for algorithms with a
small competitive ratio.
In this section, we show that such algorithms, which we call
$M$-regular, can be characterized in terms of ``critical
requests''.
In the next section, we use this characterization to show
that such algorithms are at best $1.6$-competitive.

\begin{definition}
\label{def:regular1}
For a given integer $M>0$, a deterministic algorithm is
called \emph{$M$-regular} if for each item $x$ and each
request sequence $\seq$, item $x$ is in front of all
other items after the sequence $\seq x^M$.

A randomized algorithm is called $M$-regular if it is a
discrete probability distribution over deterministic $M$-regular
algorithms.
\endproof
\end{definition}

The algorithms discussed at the end of the introduction are
all 1-regular or 2-regular.
A projective algorithm that is not $M$-regular is \FC,
which maintains the items sorted according to decreasing
number of past requests; two items which have been requested
equally often are ordered by recency of their last request,
like in \MTF.
Hence, after serving the request sequence $x^{M+1}y^M$, item
$x$ is still in front of $y$, which shows that \FC\ is not
$M$-regular for any~$M$.
Projective algorithms that are not $M$-regular are characterized in
Section~5 below, but such ``irregular'' behavior must
vanish in the long run for any algorithm with a good
competitive ratio (see Section~6).
Hence, the important projective algorithms are $M$-regular.

The following theorem asserts the existence of critical
requests, essentially the unique element of $R(x^i)$ in
Lemma~\ref{l-unique}, for those unary projections $x^i$
where this lemma applies.
For projectivity, the list items may also be maintained in
reverse order, described as case (b) in the following
theorem; competitive algorithms do not behave like this, as
we will show later.

\begin{theorem}
\label{thm:operation}
Let $\A$ be a deterministic projective algorithm
over a set $L$ of list items.
Then there exists a function
\begin{equation*}
F : U \to \nats,
\qquad F(x^i)\leq i
\qquad\hbox{for all }i
\end{equation*}
so that the following holds:

Let $Q$ be a set of unary projections containing
projections to at least three different items.
Let all unary projections to different items in $Q$ be
pairwise agile.
Then one of the following two cases (a) or (b) applies.

\begin{itemize}
\item[(a)] For all pairs of unary projections $x^i,y^j$ from $Q$ it holds that if $q=F(x^i)$ and $l=F(y^j)$, then
\begin{equation}\label{e-casea}
S_{xy}(\seq)=
\begin{cases}
  [xy]& \text{if } x_{(q)} \text{ is requested after } y_{(l)} \text{ in } \seq\\
  [yx] & \text{if } x_{(q)} \text{ is requested before } y_{(l)} \text{ in } \seq
\end{cases}
\end{equation}
\item[(b)] For all pairs of unary projections $x^i,y^j$ from $Q$ it holds that if $q=F(x^i)$ and $l=F(y^j)$, then
\begin{equation}\label{e-caseb}
S_{xy}(\seq)=
\begin{cases}
  [xy]& \text{if } x_{(q)} \text{ is requested before } y_{(l)} \text{ in } \seq\\
  [yx] & \text{if } x_{(q)} \text{ is requested after } y_{(l)} \text{ in } \seq
\end{cases}
\end{equation}
\end{itemize}
\end{theorem}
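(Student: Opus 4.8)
The plan is to build the critical-request function $F$ directly from the sets $R(x^i)$ studied in Lemmas~\ref{l-unique}--\ref{l-beforeafter}. First I would observe that the hypothesis of the theorem—that $Q$ contains unary projections to at least three different items, all pairwise agile—is exactly the hypothesis of Lemma~\ref{l-unique}: for any $x^i\in Q$ there are two other items $y^j,z^k\in Q$ with $x^i,y^j$ and $x^i,z^k$ both agile, so $|R(x^i)|=1$. Hence for every $x^i\in Q$ there is a unique index $q$ with $x_{(q)}\in R(x^i)$, and I define $F(x^i):=q$; for unary projections not covered by the hypothesis (or with empty $R$) set $F(x^i):=0$, and note $F(x^i)\le i$ always since $x_{(q)}$ is one of the $i$ requests to $x$. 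The content of the theorem is then: this $F$ governs the relative order of $x$ and $y$ according to whether the critical request $x_{(q)}$ precedes or follows $y_{(l)}$, uniformly with one global sign (case (a) vs.\ case (b)).

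Next I would fix a pair $x^i,y^j$ from $Q$ and show that $S_{xy}(\seq)$ depends only on the order of $x_{(q)}$ and $y_{(l)}$ in $\seq$, where $q=F(x^i)$, $l=F(y^j)$. By projectivity~(\ref{sxy}) it suffices to work with $\seq\in\perm(x^iy^j)$. Since $x^i,y^j$ are agile, some $\lambda\in\perm(x^iy^j)$ has an agile pair, and by Lemma~\ref{l-unique} (applied to both $x$ and $y$) together with the uniqueness of $R(x^i),R(y^j)$, the only agile pair that can occur in any permutation is $x_{(q)},y_{(l)}$. Now I would argue by the standard "bubble sort" connectivity of $\perm(x^iy^j)$ under adjacent transpositions: any two sequences in $\perm(x^iy^j)$ differ by a sequence of adjacent swaps, and by Lemma~\ref{l-beforeafter} the value $S_{xy}$ changes precisely when we swap $x_{(q)}$ past $y_{(l)}$ and is unchanged under every other adjacent swap. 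Therefore $S_{xy}(\seq)$ is a function of the relative order of $x_{(q)}$ and $y_{(l)}$ alone: $[xy]$ for one order and $[yx]$ for the other. Fixing which order gives which list state defines a sign $\eps_{xy}\in\{+,-\}$ for the pair.

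Finally I would show this sign is global, i.e.\ the same for all pairs in $Q$, which is what forces the dichotomy (a)/(b). Take three items $x,y,z$ with $x^i,y^j,z^k\in Q$. As in the proof of Lemma~\ref{l-unique}, interleave witnessing permutations to get a single sequence $\seq$ in which $x_{(q)},y_{(l)}$ and, say, $y_{(l)},z_{(m)}$ are simultaneously adjacent agile pairs, with $x$ adjacent to $y$ and $y$ adjacent to $z$ in $S(\seq)$ (Lemma~\ref{l-adjacentinlist}); by choosing the interleaving one can also make $x_{(q)},z_{(m)}$ land adjacently after a swap, so all three pairwise orders are controlled. Writing out the $2^3$ possible orders of the three critical requests and the resulting list states of $\{x,y,z\}$, transitivity of the list order on three items rules out any "mixed" choice of the three signs $\eps_{xy},\eps_{yz},\eps_{xz}$: either all three pairs order "$x$ in front iff its critical request is later" (case (a)) or all three do the reverse (case (b)). Propagating this along the connected structure of $Q$ (any two items of $Q$ sit in a common triple, since $Q$ has $\ge 3$ items and is pairwise agile) gives a single sign for all of $Q$, which is exactly~(\ref{e-casea}) or~(\ref{e-caseb}).

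The main obstacle I anticipate is the bookkeeping in the interleaving/triple argument: one must construct, for three unary projections, a concrete sequence $\seq$ witnessing all the required adjacencies before and after the relevant swaps (so that Lemmas~\ref{l-adjacentinlist} and~\ref{l-beforeafter} can all be invoked on the same $\seq$), and then carefully enumerate the $3$-item list states to see that transitivity forbids the mixed sign patterns. The rest—defining $F$, checking $F(x^i)\le i$, and the connectivity argument reducing $S_{xy}(\seq)$ to the order of the two critical requests—is routine given the earlier lemmas.
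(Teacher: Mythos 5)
Your proposal is correct and follows essentially the same route as the paper: define $F(x^i)$ as the unique element of $R(x^i)$ guaranteed by Lemma~\ref{l-unique} (which applies because $Q$ contains pairwise agile projections to at least three items), deduce from Lemma~\ref{l-beforeafter} together with connectivity of $\perm(x^iy^j)$ under adjacent transpositions that each pair individually obeys (\ref{e-casea}) or (\ref{e-caseb}), and then show the choice is uniform over $Q$. The only point where you diverge is the uniformity step: the paper's (one-sentence) argument derives a contradiction with Lemma~\ref{l-adjacentinlist} by producing a sequence in which two critical requests are adjacent while the corresponding items are not adjacent in the list, whereas you rule out a mixed sign pattern on a triple $x^i,y^j,z^k$ by choosing the order of the three critical requests so that the three pairwise rules force a cyclic relative order, contradicting that $S(\seq)$ is a total order; both contradictions are available, and your transitivity version is, if anything, more self-contained since it needs no adjacency bookkeeping at all (any interleaving realizing the chosen order of the three critical requests suffices, by projectivity applied to each pair). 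Two trivial points: there are $3!=6$ orders of the three critical requests rather than $2^3$ (the $2^3$ count belongs to the sign patterns), and your propagation of the common sign from triples to all of $Q$ is indeed routine chaining through triples of projections to distinct items.
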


\begin{proof}
Since all pairs of unary projections in $Q$ are pairwise
agile, we can conclude $|R(x^i)|=1$ for all $x^i\in Q$ by
Lemma~\ref{l-unique}.
This allows us to define $F(x^i)=q$ if $x_{(q)}\in R(x^i)$.
 From Lemma~\ref{l-beforeafter} we can conclude that for
every pair $x^i,y^j$, either~(\ref{e-casea})
or~(\ref{e-caseb}) holds.

It remains to prove that
either all pairs are operated by (\ref{e-casea}) or by~(\ref{e-caseb}).
If this was not the case, then it is not hard to see that
one can construct a sequence $\seq$ which has a pair of
critical requests adjacent to each other in $\seq$ (which
define an agile pair) without
the corresponding items being adjacent in $S(\seq)$, which
contradicts Lemma~\ref{l-adjacentinlist}:
For example, suppose
$F(x^i)=q$, $F(y^j)=r$, and $F(z^k)=s$, consider $\seq$ in
$\perm(x^iy^jz^k)$ so that $\seq$ has the three consecutive
requests $x_{(q)}z_{(s)}y_{(r)}$, and assume that
$S_{xyz}(\seq)=[xyz]$ because $x_{(q)}$ is requested before
$y_{(r)}$ according to (\ref{e-caseb}) and because $y_{(r)}$
is requested after $z_{(s)}$ according to (\ref{e-casea});
then the critical requests $x_{(q)}$ and $z_{(s)}$ are
adjacent in $\seq$ but $x$ and $z$ are not adjacent in
$S(\seq)$.
\end{proof}

The following theorem asserts that, in a list with at least
three items, an $M$-regular algorithm operates according to
critical requests as in Definition~\ref{d-crit} for all
pairs of items that have been requested $M$ or more times.
That is, case (b) of Theorem~\ref{thm:operation}, where the
list items are arranged backwards, does not apply.
In addition, the critical request to any item must be one
of the last $M$ requests to that item, which means that the
relative critical request $f(x^i)$ in (\ref{relcrit}) is less
than~$M$.

\begin{theorem}
\label{thm:operationMreg}
Let $\A$ be a deterministic projective $M$-regular algorithm
over a set $L$ of at least three list items.
Then there exists a function
\begin{equation*}
F : U \to \nats,
\qquad F(x^i)\leq i
\qquad\hbox{for all }i
\end{equation*}
so that the following holds. Let $x,y\in L$. Let $\seq$ be
any request sequence with $|\seq_x|\ge M$ and
$|\seq_y|\ge M$.
Then, with $q=F(x^i)$ and $l=F(y^j)$,
\[
S_{xy}(\seq)=
\begin{cases}
  [xy]& \text{if } x_{(q)} \text{ is requested after } y_{(l)} \text{ in } \seq\\
  [yx] & \text{if } x_{(q)} \text{ is requested before } y_{(l)} \text{ in } \seq
\end{cases}
\]
Moreover, with $f(x^i)$ defined as in $(\ref{relcrit})$, we
have $f(x^i)<M$ for all~$i$.
\end{theorem}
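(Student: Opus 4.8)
The plan is to bootstrap Theorem~\ref{thm:operationMreg} from Theorem~\ref{thm:operation} by showing that $M$-regularity rules out case~(b) and lets us cover \emph{all} pairs $x,y$ with $|\seq_x|,|\seq_y|\ge M$, not just those lying in a single set $Q$ of pairwise agile projections. First I would check that for every item $x$ and every $i\ge M$, the unary projection $x^i$ is ``agile against every other item''. Concretely, fix distinct $x,y$ and $i,j\ge M$; the sequence $x^iy^j$ leaves $y$ in front (by $M$-regularity applied to $\seq=x^i$ and the block $y^j$, since $j\ge M$), while $y^jx^i$ leaves $x$ in front, so $x^i,y^j$ is agile in the sense of Definition~\ref{d-agile}. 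Since $L$ has at least three items, for any pair $x^i,y^j$ with $i,j\ge M$ we can pick a third item $z$ and set $Q=\{x^i,y^j,z^M\}$; all three projections are to different items and are pairwise agile by the argument just given, so Theorem~\ref{thm:operation} applies to $Q$ and produces the function $F$ with $F(w^k)\le k$, together with the dichotomy (a)/(b) for the pairs inside $Q$, in particular for $x^i,y^j$.

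The next step is to argue that the \emph{same} case, either always (a) or always (b), holds across all such pairs, and then to eliminate (b). For consistency of the case across pairs: if $x^i,y^j$ were governed by (a) while $x^i,z^k$ were governed by (b), consider the three-element set $Q'=\{x^i,y^j,z^k\}$, which is pairwise agile by the first step; Theorem~\ref{thm:operation} forces a single case (a) or (b) on all pairs of $Q'$, a contradiction. Hence a single global case applies to every pair with both projections of length $\ge M$. To rule out (b): take $\seq=x^{i}y^{j}$ with $i,j\ge M$ and, say, $i>F(x^i)$ possibly failing — so instead use a longer sequence, e.g. append enough extra copies so the critical request index is strictly interior, or simply note $F(x^i)\le i$ and take the sequence $y^{j}x^{i}$, where the last request is $x_{(i)}$. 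Under $M$-regularity $x$ is in front after $y^jx^i$, i.e. $S_{xy}(y^jx^i)=[xy]$; here $x_{(F(x^i))}$ occurs after $y_{(F(y^j))}$ (all $x$-requests come after all $y$-requests), so formula~(\ref{e-casea}) predicts $[xy]$ — consistent with (a) — whereas~(\ref{e-caseb}) would predict $[yx]$, a contradiction. Therefore case~(a) holds.

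It remains to extend the conclusion from the pure sequences $x^iy^j$ to arbitrary $\seq$ with $|\seq_x|\ge M$ and $|\seq_y|\ge M$. This is immediate from projectivity: by~(\ref{sxy}), $S_{xy}(\seq)=S_{xy}(\seq_{xy})$, and $\seq_{xy}\in\perm(x^iy^j)$ where $i=|\seq_x|\ge M$, $j=|\seq_y|\ge M$. Theorem~\ref{thm:operation} applied to $Q=\{x^i,y^j,z^M\}$ gives~(\ref{e-casea}) for the pair $x^i,y^j$ and \emph{all} sequences in $\perm(x^iy^j)$, in particular $\seq_{xy}$; and the relative order of $x_{(q)}$ and $y_{(l)}$ is the same in $\seq$ as in $\seq_{xy}$ since projecting does not reorder requests. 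One should also observe the $F$ produced is independent of the auxiliary item $z$ and of $i$: the index $q$ with $x_{(q)}\in R(x^i)$ is intrinsic to $x^i$ by Lemma~\ref{l-unique}, so writing $F(x^i)$ is unambiguous, and a brief consistency check (e.g. using a single master set $Q$ containing $x^i$ for all relevant $i$, or the agility argument that links $x^i$ to $x^{i'}$ through a common third item) pins down one global $F$.

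The main obstacle I anticipate is the bookkeeping around short projections and the global consistency of $F$: Theorem~\ref{thm:operation} only speaks about a fixed set $Q$ of pairwise-agile projections, so I must be careful that the definition $F(x^i)=q$ for $x_{(q)}\in R(x^i)$ is well-defined for every $i\ge M$ simultaneously and agrees no matter which auxiliary third item or which partner $y^j$ is used to certify agility — this is exactly what Lemma~\ref{l-unique} buys us, but the argument needs to be assembled rather than quoted verbatim. A secondary subtlety is making sure the ``always case~(a)'' conclusion is genuinely global over all pairs at once and not just pairwise; the three-element-set trick above handles this, but it relies on $|L|\ge3$, which is why the hypothesis appears.
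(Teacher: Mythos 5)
Your proposal is correct and follows essentially the same route as the paper: $M$-regularity makes all unary projections of length at least $M$ pairwise agile, Theorem~\ref{thm:operation} then supplies $F$ and the (a)/(b) dichotomy, and a monotone sequence such as $y^jx^i$ (the paper uses $x^My^M$) rules out case (b). The only difference is that the paper applies Theorem~\ref{thm:operation} once to the single master set $Q=\{x^i\mid x\in L,\ i\ge M\}$, which makes your stitching of three-element sets and the separate well-definedness check for $F$ unnecessary --- a shortcut you yourself note at the end.
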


\begin{proof}
Let $Q$ be the set of all unary projections $x^i$ with $i\ge M$.
This set has all the properties of the set $Q$ in
Theorem~\ref{thm:operation}, where clearly case (a) applies
because $S_{xy}(x^My^M)=[yx]$.
Because $\A$ is $M$-regular, for $i\ge M$ the critical
request $F(x^i)$ is one of the last $M$ requests to~$x$,
which shows that $f(x^i)<M$; for $i<M$ this holds trivially.
\end{proof}

\section{The Lower Bound for $M$-regular Algorithms}\label{sec:lowerbound}
In this section, we use Theorem~\ref{thm:operation} to prove
the following result.

\begin{theorem}
\label{t:Mreg}
No $M$-regular projective algorithm is better than
$1.6$-competitive.
\end{theorem}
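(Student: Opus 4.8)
The plan is to reduce the lower bound for lists of arbitrary length to a computation on a short list, exploiting the projective structure and Theorem~\ref{thm:operation}. By Theorem~\ref{in:thmpa}, any $M$-regular projective algorithm $\A$ has competitive ratio equal to its competitive ratio on two-item lists, so it suffices to exhibit, for an arbitrary $M$-regular projective algorithm, a family of request sequences on a small list on which $\A$ pays at least $1.6$ times the optimal (projected) cost. The key structural input is that, by Theorem~\ref{thm:operation} applied to the set $Q$ of unary projections of length $\ge M$ over three items, $\A$ is governed on "long" projections by a single critical-request function $F$ with $F(x^i)\le i$, in the non-reversed case~(a); equivalently, by the relative critical request $f(x^i)=i-F(x^i)$. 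So on the relevant sequences $\A$ behaves exactly like a (possibly randomized) critical request algorithm, and the whole argument becomes a question about how recent the critical requests can be forced to be.

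First I would set up the adversarial sequence. Working on a list of (say) three items $a,b,c$ and iterating a short block pattern, I would build a long request sequence $\seq$ so that every prefix keeps the projections long (length $\ge M$), so that Theorem~\ref{thm:operation}(a) controls $S_{xy}$ throughout, and so that the pairwise-projected optimum $\OPTBAR$ on each pair is small. The natural choice is a sequence on which, regardless of the value of $f(x^i)\in\{0,1,2,\dots\}$ chosen by (the deterministic realizations of) $\A$, the online cost on at least one pair is large. Since $\COMB$ achieves exactly $1.6$ by mixing $\BIT$ (with $f\in\{0,1\}$) and $\TS$ (with $f=1$), the extremal instances are presumably the ones that are simultaneously bad for "$f=0$-like" and "$f=1$-like" behavior — e.g. alternating runs $\dots a^k b^k a^k b^k\dots$ together with interleaved single requests that punish a delayed critical request. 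I would phrase the cost lower bound per pair as a function of the distribution $\A$ induces on $f$-values and then take a worst case / use a factor-revealing argument (LP duality or an explicit potential) to show the expected online-to-$\OPTBAR$ ratio is $\ge 1.6$ on some pair.

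The main obstacle I expect is pinning down the exact extremal sequence and the matching dual certificate: one must choose the block structure so that the bound $1.6$ is actually tight (not merely $1.5$, which is the trivial two-item bound), which forces using at least three items in an essential way — the third item is what lets the adversary penalize the algorithm for having a critical request that is old by inserting a request to the third item between the old critical request and the "now", exactly as in the non-projectivity example $\seq = baacbc$ earlier in the paper. Concretely, the difficulty is a case analysis over where $F(x^i)$ can sit: if $\A$ always keeps critical requests at the very last request ($\MTF$-like, $f=0$) it loses on sequences like $xyxyxy\dots$ against a static offline arrangement, while if $\A$ ever pushes the critical request back it loses on sequences like $x^kyx^ky\dots$; balancing these across a randomized mixture and across the $\binom{|L|}{2}$ pairs via~(\ref{in:OPTxy}) and~(\ref{in:anyAxy}) is where the number $1.6$ comes out, and getting the bookkeeping to close exactly is the crux. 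I would finish by noting that the constant $b$ in~(\ref{c-comp}) is harmless since the construction iterates a fixed block arbitrarily many times, so the additive term is dominated.
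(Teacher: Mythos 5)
Your proposal is a plan rather than a proof, and it stops exactly where the paper's actual work begins. The high-level reduction you describe (use Theorem~\ref{thm:operationMreg} to replace $\A$ by a critical-request function $f$, then find sequences that are simultaneously bad for small and large $f$) is indeed the right frame, but two essential ideas are missing. First, a deterministic $M$-regular algorithm is free to let $f(x^i)$ depend arbitrarily on $i$, so it can ``dodge'' any fixed periodic block pattern by changing its critical request from one occurrence of the block to the next; your sketch treats $f$ as if it were a single value per algorithm. The paper defeats this by applying Yao's principle to the family $\Lambda(K,T)=\{x^{\M+t}y^{\M+h}\phi^K\}$ with uniformly random offsets $h,t$: the randomization of the starting point guarantees that at each ``good'' state $(i,j)$, ranging over $\lambda\in\Lambda$, every position of the block $\phi$ is served exactly once, so the cost can be lower-bounded state by state via a finite case analysis on the pair $(f(x^i),f(y^j))$, followed by a count showing almost all states are good. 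Nothing in your proposal plays this role. Second, you never exhibit the block or verify the constant: the paper's $\phi$ in (\ref{e:defphi}) interleaves segments of type $x^{\M}yx^{\M}$ (which cost an $f\ge1$ algorithm extra) with segments of type $x^{\M}yxyx^{\M}$ (which cost an $f=0$ algorithm extra), has offline cost exactly $10$, and incurs online cost at least $16$ in every case of Table~\ref{phitable}; the ratio $16/10=1.6$ is the theorem. You explicitly defer this (``getting the bookkeeping to close exactly is the crux''), which is to say the proof is not done.

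Two smaller corrections. Theorem~\ref{in:thmpa} transfers an upper bound (two-item competitiveness implies general competitiveness); it does not assert equality of ratios, so it cannot by itself reduce the lower bound to two items. The paper instead simply requests only two items, so that $\OFF(\lambda)$ coincides with the two-item offline cost up to an additive constant. Relatedly, the third item is needed only to invoke Theorem~\ref{thm:operationMreg} (a list of at least three items is required for the critical-request characterization); the adversarial sequences themselves live on two items, not three as your construction suggests, and no request to a third item is ever used to ``punish an old critical request.''
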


We first give an outline of the proof. Given any $\eps>0$
and $b$, we will show that there is a discrete probability
distribution $\pi$ on a finite set $\Lambda$ of request
sequences so that
\begin{equation}\label{e:yao}
  \sum_{\lambda \in \Lambda} \pi(\lambda)
  \frac{\A(\lambda)}{\OFF(\lambda)+b}\ge 1.6-\eps,
\end{equation}
for any deterministic $M$-regular algorithm $\A$. Then
\emph{Yao's theorem} \cite{Y77} asserts that also any
randomized $M$-regular algorithm has competitive ratio
$1.6-\eps$ or larger. This holds for any $\eps>0$, so the
competitive ratio is at least~$1.6$. This ratio is achieved by
\COMB, and therefore $1.6$ is a tight bound for the
competitive ratio of $M$-regular algorithms.

All $\lambda \in \Lambda$ will consist only of requests to
two items $x$ and~$y$. In what follows, let $\M\ge M$ and
$\M\ge 3$ and let the request sequence $\phi$ be
\def\g#1{{#1}\,}
\begin{equation}
\label{e:defphi}
\phi:=
\g{x^\M}
\g{yx^\M}
\g{y^\M}
\g{xy^\M}
\g{x^\M}
\g{yxyx^\M}
\g{y^\M}
\g{xyxy^\M}.
\end{equation}
By the last observation in Theorem~\ref{thm:operationMreg},
$x$ will be in front of the list after any subsequence
$x^\M$ of requests, and $y$ after any subsequence~$y^\M$.
The purpose of the following construction is to obscure to
the algorithm (which operates according to critical requests
defined by the unary projections) the exact location of a
request to $x$ or $y$ in a repetition of~$\phi$.

Let $K$ and $T$ be positive integers
and let $H$ be the number of requests to~$x$ (and to~$y$)
in~$\phi$, that is,
\begin{equation}
\label{defH}
  H:=|\phi|/2=4\M+4.
\end{equation}
Then the set $\Lambda$ of sequences in (\ref{e:yao}) is given by
\begin{equation}
\label{e:uniformdist}
\Lambda = \Lambda(K,T):=\{
x^{\M+t} y^{\M+h}\phi^K \mid  0\leq h<H, 0\leq t< HT\},
\end{equation}
where $\pi$ chooses any $\lambda$ in $\Lambda$ with equal
probability $\pi(\lambda)=1/H^2T$.
Note that in (\ref{e:uniformdist}), $K$ is the number of
repetitions of~$\phi$, the number $H$ depends on $\M$ but is
otherwise constant, $h$ creates a prefix for $y$ so as to
achieve any possible position inside $\phi$ for a given
request to~$y$, and $T$ is a second parameter that defines
the range of~$t$ so that the number of requests to $x$ can
vary widely relative to~$y$; it is not necessary to
introduce such a parameter for~$y$.

It is easy to see that $\OFF$ pays exactly ten units for
each repetition of $\phi$ (which always starts in offline
list state $[yx]$).
Assuming that the initial list state is also $[yx]$, all
sequences in $\Lambda$ have offline cost $10K+2$.
This and the fact that $\pi(\lambda)$ for $\lambda\in\Lambda$
is constant allows us to show (\ref{e:yao}) once we can prove
-- which we will do in the course of our argument --
\begin{equation}
\sum_{\lambda \in \Lambda}\A(\lambda)
\geq 16KH^2T - o(KH^2T),
\label{e:yao2}
\end{equation}
because then
\begin{eqnarray*}
\sum_{\lambda \in \Lambda} \pi(\lambda) \frac{\A(\lambda)}{\OFF(\lambda)+b}
&=& \frac{\sum_{\lambda \in \Lambda}\A(\lambda)}
{\sum_{\lambda \in \Lambda}(\OFF(\lambda)+b)}
\geq \frac{16KH^2T - o(KH^2T)}{(10K+2+b)H^2T}
\geq 1.6 - \eps
\end{eqnarray*}
for $K$ and $T$ large enough.

Recall that by Theorem~\ref{thm:operationMreg}, the
algorithm uses critical requests that depend only on the
unary projections $x^i$ and $y^j$ to $x$ and $y$ of a
sequence in $\Lambda$.
We refer to the pair $(i,j)$ as a \emph{state}, according to
the following definition.

\begin{definition}
A request sequence $\seq$ ends at \emph{state}
$(i,j)$ if $|\seq_x|=i$ and $|\seq_y|=j$.
The request sequence $\lambda$ \emph{passes} state $(i,j)$
if there is a proper prefix $\seq$ of $\lambda$, with
$\lambda=\seq\tau$ for non-empty $\tau$, so that $\seq$ ends at $(i,j)$.
The request in $\lambda$ \emph{after} $(i,j)$ is
the first request in~$\tau$.
\endproof
\end{definition}

\begin{definition}
\label{d-pass}
Let $\A_{\lambda}(i,j)$ denote the online cost
of serving the requests in $\lambda$ after $(i,j)$.
If $\lambda$ does not pass $(i,j)$, let $\A_{\lambda}(i,j)=0$.
\endproof
\end{definition}

We will show that the set $\Lambda$ in (\ref{e:uniformdist})
is constructed in such a way that almost all states which
are passed by some sequence $\lambda$ in $\Lambda$ are
so-called \emph{good states}, defined as follows.

\begin{definition}
\label{d-good}
A state $(i,j)$ is called \emph{good} if for every proper
prefix $\seq$ of~$\phi$ (that is, $0\le|\sigma|<2H$) there
exist unique $h,k,t$ with $0\le h<H$, $0\le k<K$ and
$0 \le t<HT$ so that $x^{\M+t}y^{\M+h}\phi^k\seq$ ends at
state~$(i,j)$.
\endproof
\end{definition}

Note that $H$ is the number of requests to $y$ in $\phi$,
so given $(i,j)$ and the prefix $\sigma$ of $\phi$ in
Definition~\ref{d-good}, there is at most one choice of $h$
and $k$, and therefore at most one $t$, so that
$x^{\M+t}y^{\M+h}\phi^k\seq$ ends at state $(i,j)$ (see also
(\ref{good}) below).
The state $(i,j)$ is good if these $h,k,t$ exist for all
proper prefixes $\sigma$ of $\phi$, which means that each
position inside the repetition of $\phi$ in the sequence
chosen randomly from $\Lambda$ is equally likely.

The following Lemma~\ref{l:good16} states that good states
incur large costs.
After that we prove that almost all states are good and thus
complete the proof of Theorem~\ref{t:Mreg}.

\begin{lemma}
\label{l:good16}
Let $(i,j)$ be a good state. Then
\[
\sum_{\lambda\in\Lambda}\A_\lambda(i,j) \ge 16.
\]
\end{lemma}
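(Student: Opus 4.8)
The plan is to exploit the definition of a good state: as the proper prefix $\sigma$ of $\phi$ ranges over all $2H$ possibilities, the single $\lambda\in\Lambda$ whose prefix $x^{\M+t}y^{\M+h}\phi^k\sigma$ ends at $(i,j)$ changes, and the request in $\lambda$ after $(i,j)$ is exactly the request of $\phi$ immediately following $\sigma$ (or, when $\sigma$ is all of $\phi$ minus its last symbol, the first request of the next copy of $\phi$, which is the same as the first symbol of $\phi$ since $\phi$ is what gets repeated). So as $\sigma$ runs through the $2H$ proper prefixes of $\phi$, the requests served after $(i,j)$ by the various $\lambda$ run through all $2H$ consecutive request-pairs of the periodic sequence $\phi^\infty$ anchored at position $(i,j)$ — i.e. one full period of $\phi$. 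Summing $\A_\lambda(i,j)$ over $\lambda\in\Lambda$ therefore equals the total online cost incurred by $\A$ while serving one entire copy of $\phi$ starting from whatever the list state is at $(i,j)$.

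Next I would reduce this to a two-item statement. Since all $\lambda\in\Lambda$ use only items $x$ and $y$, and $\A$ is projective, the relevant list state at $(i,j)$ is just $S_{xy}$, which by Theorem~\ref{thm:operation} (case (a) — case (b) is ruled out exactly as in Theorem~\ref{thm:operationMreg}, because $i,j\ge\M>M$ at every state we care about, so $S_{xy}(x^My^M)=[yx]$ forces case (a)) is determined by the relative order of the critical requests $x_{(q)}$ and $y_{(l)}$, with $q=F(x^i)$, $l=F(y^j)$. The access cost of a request to $x$ is $1$ if $y$ is in front, $0$ otherwise, plus any paid-exchange cost to reorder $x,y$ between consecutive list states; by~(\ref{in:anyAxy}) restricted to the pair $\{x,y\}$, $\sum_{\lambda}\A_\lambda(i,j)$ is precisely $\A_{xy}$ of one copy of $\phi$ run on the two-item list, started in the state dictated by the critical-request rule. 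So the claim becomes: for every deterministic projective $M$-regular rule, serving $\phi$ once on two items costs at least $16$, regardless of starting state.

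The final and main step is this purely two-item calculation. The sequence $\phi = x^\M\,yx^\M\,y^\M\,xy^\M\,x^\M\,yxyx^\M\,y^\M\,xyxy^\M$ was engineered so that $\OPT$ pays $10$ per copy while any critical-request algorithm pays at least $16$: I would go block by block, using that after a long run $x^\M$ the algorithm must (by $M$-regularity, since $\M>M$) have $x$ in front, so the isolated $y$ (or the $yxy\ldots$ pattern) that follows is charged, and symmetrically for the $y^\M$ blocks; the extra $yxy$ and $xyx$ interleavings are there to force additional cost no matter where $F$ places the critical request within a block. One enumerates the possible positions of the critical requests relative to the block boundaries — there are only a few qualitatively different cases because $F(x^i)\le i$ and the relevant $i,j$ are all $\ge\M$ — and checks the cost is $\ge 16$ in each. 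I expect this block-by-block case analysis to be the real obstacle: it is elementary but fiddly, and one must be careful that a critical request placed near a block boundary does not let the algorithm "straddle" two blocks cheaply; the choice $\M>M$ and $\M\ge3$ and the precise interleaving pattern in~(\ref{e:defphi}) are exactly what close those loopholes, and I would make sure the argument uses all of them.
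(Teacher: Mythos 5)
There is a genuine gap, and it sits exactly at the step you treat as routine. You claim that $\sum_{\lambda\in\Lambda}\A_\lambda(i,j)$ ``equals the total online cost incurred by $\A$ while serving one entire copy of $\phi$ starting from whatever the list state is at $(i,j)$,'' and you then reduce the lemma to: every deterministic projective $M$-regular rule pays at least $16$ to serve one copy of $\phi$ on two items. That is not the quantity being bounded, and the reduced claim is in fact false. Each term $\A_\lambda(i,j)$ is the cost of a \emph{single} request, served in the \emph{fixed} state $(i,j)$; what varies over $\lambda$ is only which position of $\phi$ that request occupies (equivalently, how the recent history is aligned with the block structure of $\phi$). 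Consequently the relative critical-request offsets $f(x^i)$ and $f(y^j)$ are the \emph{same} for every term of the sum. If instead you actually run $\A$ through a copy of $\phi$, the state advances to $(i+1,j)$, $(i+1,j+1),\dots$ after each request, and the values $F(x^{i+1}), F(x^{i+2}),\dots$ may be chosen completely independently of $F(x^i)$. An adversarially chosen $F$ can exploit this to track $\OPT$ on the unshifted sequence $\phi^K$ and pay only $10$ per copy (e.g.\ set the critical request of each relevant $x^i$ to be the last or second-to-last request so as to reproduce $\OPT$'s list state at every prefix of $\phi^K$; this is $M$-regular for $M\ge 2$). The random offsets $h$ and $t$ in the definition of $\Lambda$ exist precisely to destroy this synchronization between the algorithm's state counter and the phase of $\phi$; your reduction discards them.

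The correct statement — and the one the paper's Table~\ref{phitable} verifies — is that the sum equals the cost of serving the requests of $\phi$ where \emph{every} request is charged according to the single frozen pair $\bigl(f(x^i), f(y^j)\bigr)$, i.e.\ the critical request to $x$ is always taken to be the $f(x^i)$-th from the end of the $x$-requests in the current history, and likewise for $y$. This is why the case analysis collapses to only five cases, indexed by $\bigl(f(x^i),f(y^j)\bigr)\in\{(0,0),(0,{\ge}1),(1,1),(1,{\ge}2),({\ge}2,{\ge}2)\}$ up to the $x\leftrightarrow y$ symmetry, each giving total cost at least $16$. Your proposal also leaves that case analysis entirely undone, describing it only as ``fiddly''; since it is the substance of the lemma, the proof is incomplete even apart from the misidentification above. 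Your remark that case (b) of Theorem~\ref{thm:operation} is excluded by $M$-regularity is correct and matches the paper's Theorem~\ref{thm:operationMreg}.
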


\begin{proof}
Consider any sequence $\lambda$ in $\Lambda$ so that $\lambda$ passes
$(i,j)$; there are $2H$ such sequences by
Definition~\ref{d-good}.
The request in $\lambda$ after $(i,j)$ is some request in~$\phi$.
The cost $\A_\lambda(i,j)$
of serving that request depends on whether the requested
item $x$ or~$y$ is in front or not.
This, in turn, is determined by the terms $f(x^i)$ and $f(y^j)$
as defined in~(\ref{relcrit}),
which determine the relative critical requests to $x$ and $y$
in~$\lambda$.
Recall that the item with the more recent critical request
is in front, and that $f(x^i)$ and $f(y^j)$ are less
than~$\M$ by Theorem~\ref{thm:operationMreg}.

Because $(i,j)$ is a good state, we obtain exactly all the
requests in $\phi$ as the requests after $(i,j)$ in
$\lambda$ when considering all $\lambda$ in $\Lambda$ that
pass~$(i,j)$.
Therefore, the total cost
$\sum_{\lambda\in\Lambda}\A_\lambda(i,j)$ is the cost of
serving exactly the requests in $\phi$ according to
the critical requests as given by $f(x^i)$ and $f(y^j)$.

\newcommand{\ontop}[2]{\genfrac{}{}{0pt}{}{#1}{#2}}
\begin{table}[ht]
\arraycolsep2.5pt
\noindent\strut\hfill
$
\begin{array}{r|r@{\,\,}||l|l|l|l|l|l|l|l||r}
f(x^i) &
f(y^j) &
 x^{\M} &
yx^{\M} &
y^{\M} &
xy^{\M} &
x^{\M} &
yxyx^{\M} &
y^{\M} &
xyxy^{\M} &
\sum_{\lambda\in\Lambda}\A_\lambda(i,j)
\\
\hline
0 & 0 &
1..   &
11..  &
1..  &
11..  &
1..  &
1111..  &
1..  &
1111..  &
16 \qquad \\

0 & {\ge}1 &
1..  &
1..  &
11..  &
111..  &
1..  &
101..  &
11..  &
11011..  &
{\ge}16 \qquad \\

1 & 1 &
11..  &
1..  &
11..  &
1..  &
11..  &
1011..  &
11..  &
1011..  &
16 \qquad \\

1 & {\ge}2 &
11..  &
1..  &
111..  &
1..  &
11..  &
101..  &
111..  &
10111..  &
{\ge}18 \qquad \\

{\ge}2 & {\ge}2 &
111..  &
1..  &
111..  &
1..  &
111..  &
101..  &
111..  &
101..  &
{\ge}18 \qquad \\
\end{array}
$
\hfill\strut

\caption{Online costs $\A_{\lambda}(i,j)$ for all $\lambda$
that pass a good state $(i,j)$, which are the costs of
serving the requests in~$\phi$.
They depend on the relative critical requests $f(x^i)$ and $f(y^j)$.
}
\label{phitable}
\end{table}

The rows in Table~\ref{phitable} show the costs
$\A_{\lambda}(i,j)$ for the possible combinations of
$f(x^i)$ and $f(y^j)$, up to symmetry in $x$ and~$y$
(explained further at the end of this proof).
For example, consider the first case
$f(x^i)=0$ and $f(y^j)=0$,
where the critical request to an item is always the most
recent request to that item, which is the \MTF{} algorithm.
Suppose that the request after $(i,j)$ is the first request,
to~$x$, in the subsequence $xy^{\M}$ of~$\phi$.
The critical request to~$x$ is the last request to $x$
earlier in $yx^{\M}$,
and the critical request to~$y$ is the last request to $y$
earlier (and more recent) in $y^{\M}$.
The critical request to $y$ is later than that to~$x$,
so $y$ is in front of $x$, and serving $x$ incurs cost~$1$,
which is the first $1$ in the table entry $11..$ in the
column for $xy^{\M}$.
The second $1$ in $11..$ is the cost of serving the first~$y$.
It is $1$ because here the critical request to~$x$ is more
recent than the critical request to~$y$.
The ``$..$'' in $11..$ correspond to the costs of later
requests to $y$ in $y^{\M}$, which are zero for $f(x^i)=0$
and $f(y^j)=0$ (so for $\M=4$ the complete cost sequence
would be $11000$).
In a good state, each cost $0$ or $1$ in the table (in
correspondence to the respective position in~$\phi$) is
incurred by a sequence~$\lambda$ in $\Lambda$.

By construction of $\Lambda$, the requests before
$x^{\M}$ in the first column of
Table~\ref{phitable} are of the form $y^{\M}$,
so $y$ is in front of~$x$, and the first request of~$x^\M$
has always cost~$1$.

In the second row in Table~\ref{phitable}, $f(x^i)=0$ and
$f(y^j)\ge1$; if $f(y^j)=1$, then the request to $y$ is
handled as in the \TS{} algorithm.
As an illustration of a more complicated case,
consider the subsequence $xyxy^{\M}$ of $\phi$ in the
last column, with associated costs $11011..$.
The first~$1$ is the cost of serving the first request
to~$x$, because the preceding requests are $\M\ge M$ requests
to~$y$ in $y^{\M}$ and because the algorithm is $M$-regular,
which means $f(y^j)<M$ by Theorem~\ref{thm:operationMreg},
so $y$ is in front of~$x$.
Because $f(x^i)=0$, the cost of serving the first $y$ in
$xyxy^{\M}$ is also~$1$, because $x$ is in front of~$y$.
The second request to~$x$ has cost~$0$ (the first $0$ in
$11011..$) because $y$ is not moved in front of~$x$ (the
critical request to~$y$ is earlier than that to~$x$ because
$f(y^j)\ge1$).
The next two costs $11$ are for the second and third request
to~$y$ in $xyxy^{\M}$, because the critical request to~$x$
is more recent.

The rows in Table~\ref{phitable} describe all cases for
$f(x^i)$ and $f(y^j)$ with $i\le j$.
They describe in fact all possible cases because for each
column in Table~\ref{phitable} there is another column with
$x$ and $y$ interchanged, where the costs for requests to
$x$ and $y$ apply in the same manner when $x$ is exchanged
with~$y$.
The respective costs in Table~\ref{phitable} are easily verified.
The right column shows that the total cost
$\sum_{\lambda\in\Lambda}\A_\lambda(i,j)$ is at least 16 in
all these cases, which proves the claim.
\end{proof}

The preceding proof of Lemma~\ref{l:good16} also shows that
1.6-competitive algorithms can only be expected when the
relative critical requests fulfill $f(x^i)\in \{0,1\}$, as
in the \MTF\ and \TS\ algorithms.

\noindent
\begin{proof}[Proof of Theorem~\ref{t:Mreg}.]
We only have to prove~(\ref{e:yao2}), which
we will do by showing
\begin{equation}
\label{yao3}
\sum_{\lambda \in \Lambda}\A(\lambda)\ge
\sum_{(i,j)\textrm{ good}}\sum_{\lambda\in\Lambda}\A_\lambda(i,j)
\ge 16KH^2T - o(KH^2T).
\end{equation}
The first inequality in (\ref{yao3}) is immediate.
For the second inequality we use Lemma~\ref{l:good16}.
It suffices to show that the number of good states is at least
\begin{equation*}
KH^2T - o(KH^2T).
\end{equation*}
By Definition~\ref{d-good}, state $(i,j)$ is good if and only if
\begin{equation}
\label{good}
\begin{array}{rcl}
i &=& \M+ t     +  kH + |\seq_x|, \\
j &=& \M+ h     +  kH + |\seq_y|,
\end{array}
\end{equation}
or equivalently
\begin{equation}
\label{hkt}
\begin{array}{rcl}
t &=&  i + h - j - (|\seq_x|-|\seq_y|), \\
h+kH &=& j - \M - |\seq_y|.
\end{array}
\end{equation}
For $0\le k<K$ and $0\le h< H$, the term $h+kH$ takes
the values $0,\ldots, KH-1$.
The second equation in (\ref{hkt}) therefore has a unique
solution in $h,k$, for any $\seq$ (where $0\le |\seq_y|<H$)
whenever $\M+H-1\le j <\M+KH$.
Because by (\ref{e:defphi}), $0\le |\seq_x|-|\seq_y|<H$,
the first equation in (\ref{hkt}) has a unique solution $t$
in $\{0,\ldots, HT-1\}$ if $j+H-1\le i \le j+HT-H$,
for every fixed~$j$.
Hence the number of good states is at least
\[
(KH-H+1)\cdot (HT-2H+2)= KH^2T-o(KH^2T)
\]
because for sufficiently large $K$ (the number of
repetitions of $\phi$) and $T$ (the number of initial
repetitions of $x$) all other terms are arbitrarily small
relative to $KH^2T$.
\end{proof}

\section{The Full Characterization}

In this section, we give the full characterization of
deterministic projective algorithms.
We consider the set $U$ of unary projections of request
sequences defined in~(\ref{defU}) as the set of nodes of
the directed graph $G=(U,E)$ with arcs $(x^i,y^j)$ in
$E$ whenever there is a request sequence $\seq$ in
$\perm(x^iy^j)$ with $S(\seq)=[xy]$.

For any two distinct items $x$ and $y$ and any $i,j\ge0$,
there is at least one arc between $x^i$ and~$y^j$.
If the pair $x^i,y^j$ is agile according to
Definition~\ref{d-agile}, then there are arcs in both
directions.
Only pairs of nodes of the form $x^i$, $x^j$ do not have
arcs between them.

Let $\cal W$ be the set of strongly connected components of~$G$,
and let $C(x^i)$ be the strongly connected component that $x^i$
belongs~to.
We think of $C(x^i)$ as a ``container'' that contains $x^i$
and all other unary projections $y^j$ with $C(y^j)=C(x^i)$.

There exists a total order $<$ on these
containers so that $C(x^i)<C(y^j)$ if $S_{xy}(\seq)=[xy]$
after serving any $\seq\in\perm(x^iy^j)$.
To see this, we define the following binary relation $P$ on
$\cal W$: Let $C(x^i)\,P\,C(y^j)$ if there is a path in $G$
from $x^i$ to $y^j$.
Then $P$ defines a partial order on $\cal W$.  It
is acyclic because cycles in $G$ belong to strongly connected
components, which are the elements of $\cal W$.
The only pairs of containers which are not ordered in $P$
are those of the form $\{x^i\}$, $\{x^j\}$ for which there
does not exist a container $C(y^k)$ with
$C(x^i)<C(y^k)<C(x^j)$ or $C(x^j)<C(y^k)<C(x^i)$.
By stipulating $\{x^i\}<\{x^j\}$ if and only if $i<j$ for
such pairs, we can extend $P$ to the desired total order
$<$.

A specific case is given by the empty unary projections
$x^0$ for items~$x$:
Note that $x^0$ and $y^j$ for any $j\ge 0$ are never in the
same container because $\perm(x^0y^j)$ contains only a
single sequence $\seq=y^j$; the state $S_{xy}(\seq)$ is
therefore either $[xy]$ or $[yx]$, so there cannot be paths
in both directions between $x^0$ and~$y^j$ in~$G$.
Hence $C(x^0)=\{x^0\}$, and $C(x^0)<C(y^0)$ if and only if $x$
is in front of $y$ in the initial list.

In summary, for a request sequence $\seq$, the total
order $<$ on $\cal W$ determines the list order between two
items $x$ and $y$ whose unary projections $\seq_x$ and
$\seq_y$ belong to different containers in~$\cal W$.

If $\seq_x$ and $\seq_y$ belong to the same container, then
the list order between $x$ and $y$ can be described by
essentially two possibilities.
First, if the container contains only projections to at most
two items $x$ and~$y$, nothing further can be said because
the relative order between $x$ and $y$ for these requests
is arbitrary without violating projectivity (for the same
reason that on a two-item list, any algorithm is projective);
the set of these containers will be denoted by~${\cal W}_2$.

Second, if a container contains unary projections for three or more
distinct items, then the algorithm's behavior can be described by
critical requests similar to Theorem~\ref{thm:operation};
the set of such containers will be denoted by $\cal W^+$.
There is a symmetric set $\cal W^-$ where the algorithm
behaves in the same manner but with the list order reversed
(which does not define competitive algorithms).

These assertions are summarized in the following theorem.

\begin{theorem}
\label{t-containers}
Consider a deterministic projective list update algorithm.
Then there are pairwise disjoint sets
${\cal W}^+$,
${\cal W}^-$,
${\cal W}_2$
whose union is $\cal W$
and a total order $<$
on $\cal W$ and a function $C:U\to \cal W$ with

\begin{itemize}
\item[(I)] $C(x^0)=\{x^0\}\in {\cal W}_2$ for all $x\in L$;
\item[(II)] for any three items $x,y,z$, if
$C(x^i)=C(y^j)=C(z^k)=w$, then $w\not\in{\cal W}_2$.
\end{itemize}

\noindent Furthermore, if $C(x^i)\not\in {\cal W}_2$, then there exists
$F(x^i)\in \{1,\ldots,i\}$ with the following properties:
For all request sequences $\seq$
with $\seq_{x}=x^i$ and $\seq_{y}=y^j$,

\begin{itemize}
\item[(III)]
if $C(x^i)<C(y^j)$ then $S_{xy}(\seq)=[xy]$;
\item[(IVa)]
if $C(x^i)=C(y^j)\in \cal W^+$ then $S_{xy}(\seq)=[xy]$
if and only if the $F(x^i)$th request to $x$ is {\em after\/} the
$F(y^j)$th request to $y$ in $\seq$;
\item[(IVb)]
if $C(x^i)=C(y^j)\in \cal W^-$ then $S_{xy}(\seq)=[xy]$
if and only if the $F(x^i)$th request to $x$ is {\em before\/} the
$F(y^j)$th request to $y$ in $\seq$.
\end{itemize}
\end{theorem}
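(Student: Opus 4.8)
The plan is to assemble the theorem from the structural pieces already developed. First I would construct the directed graph $G=(U,E)$ exactly as described, take $\mathcal W$ to be its strongly connected components, and let $C(x^i)$ be the component of $x^i$. The total order $<$ on $\mathcal W$ is obtained as in the preamble: the reachability relation $P$ is a partial order on $\mathcal W$ (antisymmetry because distinct components cannot be mutually reachable), and the only incomparable pairs are singletons $\{x^i\},\{x^j\}$ for the same item $x$ with no ``witness'' projection strictly between them; ordering those by the index settles the remaining freedom and yields a genuine total order. I would then check that $<$ is \emph{consistent}, i.e.\ $C(x^i)<C(y^j)$ forces $S_{xy}(\seq)=[xy]$ for every $\seq\in\perm(x^iy^j)$: if there were a $\seq$ with $S_{xy}(\seq)=[yx]$ then there would be an arc $y^j\to x^i$, and together with a path $x^i\to y^j$ this would put $x^i,y^j$ in the same component, contradicting $C(x^i)<C(y^j)$; the only subtlety is the incomparable-pair case, where the ordering was defined by hand precisely so the initial list order (for $x^0$) or the agreed index order is respected. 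This gives (I) and (III), with (I) following from the observation that $x^0$ forms its own singleton component lying in $\mathcal W_2$, placed by the initial list.

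Next I would define the three-way partition. Put a container $w$ in $\mathcal W_2$ if it contains unary projections to at most two distinct items; this is exactly property (II) stated contrapositively, so (II) is immediate once the partition is defined this way. For a container $w\notin\mathcal W_2$, it contains projections to at least three items, and any two projections to different items in $w$ are agile (mutual reachability in $G$ between distinct items is precisely Definition~\ref{d-agile}). Thus the set $Q$ of all projections in $w$ satisfies the hypotheses of Theorem~\ref{thm:operation}, which hands us a function $F$ on $Q$ with $F(x^i)\le i$ and a global dichotomy: either case (a) holds for \emph{all} pairs in $w$, or case (b) does. I would assign $w$ to $\mathcal W^+$ in the first case and to $\mathcal W^-$ in the second; this directly yields (IVa) and (IVb). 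One point to nail down here is that $F(x^i)\ge 1$ whenever $C(x^i)\notin\mathcal W_2$: this is because $|R(x^i)|=1$ (Lemma~\ref{l-unique}, applicable since $x^i$ is pairwise agile with at least two other items in $w$), and $R(x^i)$ is nonempty with its unique element being some genuine request $x_{(q)}$, $1\le q\le i$. The values of $F$ on projections lying in $\mathcal W_2$ containers are irrelevant and can be set arbitrarily (say $0$), so $F$ is a well-defined total function $U\to\nats$.

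The last thing to verify is \emph{global coherence} of the single function $F$ across different non-$\mathcal W_2$ containers, and that $<$, $C$, and $F$ fit together without conflict. For (IVa)/(IVb) only pairs \emph{within} one container are constrained, and Theorem~\ref{thm:operation} was applied componentwise, so there is no cross-container interaction to reconcile — the only shared object is the order $<$, already shown consistent. I would also double-check the boundary interaction between (III) and (IVa)/(IVb): when $C(x^i)=C(y^j)$ the pair is controlled by the critical-request rule and when the containers differ it is controlled by $<$, and these cases are exhaustive and disjoint by construction. The main obstacle, I expect, is the careful treatment of the incomparable singleton pairs $\{x^i\},\{x^j\}$ when extending $P$ to the total order $<$: one must argue that declaring $\{x^i\}<\{x^j\}\iff i<j$ never contradicts an actually-forced relation (it does not, precisely because no witness projection of a third item sits strictly between them, so $G$ imposes nothing), and that this choice is compatible with monotonicity requirements one might expect from a sensible algorithm. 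Everything else is bookkeeping on top of Theorem~\ref{thm:operation} and Lemmas~\ref{l-adjacentinlist}--\ref{l-beforeafter}.
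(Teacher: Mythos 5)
There is a genuine gap, and it sits exactly where the paper's proof does its real work. When you say that for a container $w\notin{\cal W}_2$ ``any two projections to different items in $w$ are agile (mutual reachability in $G$ between distinct items is precisely Definition~\ref{d-agile})'', you are conflating two different things. Agility of the pair $x^i,y^j$ means there are \emph{direct} arcs in both directions between $x^i$ and $y^j$, i.e.\ sequences $\tau,\tau'\in\perm(x^iy^j)$ (requests to $x$ and $y$ only) realizing both relative orders. Membership in the same strongly connected component only gives \emph{paths} in both directions, possibly through projections of other items, and a path $x^i\to z^k\to y^j$ does not by itself produce a sequence in $\perm(x^iy^j)$ witnessing $[xy]$. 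Without pairwise agility inside the container you cannot invoke Theorem~\ref{thm:operation} (its hypothesis is exactly pairwise agility of $Q$), so (IVa)/(IVb) are not yet established, and even your claim that $|R(x^i)|=1$ via Lemma~\ref{l-unique} presupposes the same unproved agility.

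The paper closes precisely this gap: it first proves the transitivity property (\ref{transitive}), that $(x^i,y^j)\in E$ and $(y^j,z^k)\in E$ imply $(x^i,z^k)\in E$ (by inserting the $k$ requests to $z$ into a witnessing sequence and using projectivity), and then runs a path-shortening argument on ``valid'' paths (paths visiting at least three distinct items) to contract any path between $x^i$ and $y^j$ inside the component down to a direct arc, yielding (\ref{inE}) and hence agility. The validity bookkeeping is not cosmetic: (\ref{transitive}) only applies to triples of \emph{distinct} items, and consecutive path nodes may repeat items, which is why the case analysis ($q=2$, $n=4$ versus $n>4$) appears. Note also that this is where the hypothesis that the container meets at least three items is actually used. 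The rest of your outline (construction of $G$, $\cal W$, $<$, consistency of (III), the definition of ${\cal W}_2$ giving (II), assigning $w$ to ${\cal W}^+$ or ${\cal W}^-$ by the dichotomy of Theorem~\ref{thm:operation}, and $F(x^i)\ge1$ from the nonemptiness of $R(x^i)$) matches the paper, but the missing same-component-implies-agile step is the core of the proof and must be supplied.
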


\begin{proof}
The set $\cal W$ and the order $<$ have been defined above
with the help of the graph $G$, which shows (III).
We have also shown (I) above.

As before, let ${\cal W}_2$ be the set of containers with
unary projections to at most two distinct items, which
implies~(II).

It remains to show (IVa) and (IVb).
Consider a request sequence $\seq$
with $\seq_{x}=x^i$ and $\seq_{y}=y^j$.
Let $C(x^i)=C(y^j)\not\in {\cal W}_2$, so that there is a
third item $z\not\in\{x,y\}$ with $C(x^i)=C(y^j)=C(z^k)$.
We want to apply Lemma~\ref{l-unique}.
To this end, we first show the ``mixed transitivity''
(note that $x,y,z$ are distinct items)
\begin{equation}
\label{transitive}
(x^i,y^j)\in E \quad\text{and}\quad (y^j,z^k)\in E
\quad\Longrightarrow\quad(x^i,z^k)\in E.
\end{equation}
Let $(x^i,y^j)\in E$, so that $S_{xy}(\seq)=[xy]$
for some $\seq\in \perm(x^iy^j)$.
If $(y^j,z^k)\in E$, then one can insert $k$ requests to $z$
into $\seq$ so that $S_{yz}(\seq)=[yz]$.
Adding the requests to $z$ does not change $S_{xy}(\seq)$,
so $S(\seq)=[xyz]$, which implies $(x^i,z^k)\in~E$.
This shows~(\ref{transitive}).

With the help of (\ref{transitive}), we now show that if $C(x^i)=C(y^j)$,
then the pair $x^i, y^j$ is agile according to
Definition~\ref{d-agile}. We will prove this by showing that
\begin{equation}
\label{inE}
(x^i,y^j)\in E \quad\text{and}\quad (y^j,x^i)\in E.
\end{equation}
To prove (\ref{inE}), recall that $C(x^i)$ is a strongly
connected component of the graph~$G$ which also contains
$y^j$ and~$z^k$.
Therefore there exists a path in $G$ from $x^i$ to $y^j$ via~$z^k$.
This path is a sequence of unary projections $u_0,\ldots, u_n$
with $u_0=x^i$, $u_l=z^k$ for some $0<l<n$, and
$u_n=y^j$.
Let $s_i$ be the item of the corresponding unary projection
$u_i$, in particular $s_0=x$, $s_l=z$, $s_n=y$.
Ignoring the superscripts of the unary projections, we
are essentially looking at a sequence of items
$s_0s_1s_2\ldots s_n$ where $s_i\ne s_{i+1}$ for
$0\le i< n$.
We can shorten that sequence whenever $s_{q-1}$, $s_q$, and
$s_{q+1}$ are three distinct items by removing $s_q$,
because then $(s_{q-1},s_{q+1})\in E$ by (\ref{transitive}).
The problem is that we do not want to shorten it in such a
way that we cannot apply (\ref{transitive}) any more.

We call a path $u_0\ldots u_n$ between $x^i$ and $ y^j$ {\em
valid} if $|\{s_0,\ldots,s_n\}|\ge 3$.
We claim that if there exists a valid path between $x^i$ and
$y^j$ of length $n>2$, then there exists also a valid path
of length $n-1$.

To show this claim, consider the smallest $q$ so that
$s_{q-1}$, $s_q$, and $s_{q+1}$ are three distinct items.
If the path $u_0\ldots u_n$ remains valid after removing
$u_q$, we are done.
Otherwise, clearly $|\{s_0,\ldots,s_n\}|= 3$,
and removing $u_q$ makes the path invalid, which means
$s_q=z$ for some $z\not\in \{x,y\}$, and $s_q$ is the only
occurrence of $z$ in $s_0s_1s_2\ldots s_n$,
because $s_0=x$ and $s_n=y$.
We claim that $q=1$, because if $q>1$ then the sequence
$s_0s_1\ldots s_q$ is either of the form $xyxy\ldots xyz$ or
$xyx\ldots yxz$, and in both cases $s_{q-1}$ can be removed,
but $q$ was chosen smallest.
So indeed $s_1=z$, and this is the only occurrence of~$z$.
Then $s_0s_1s_2\ldots s_n$ is either of the form
$xzxy\ldots xy$ or $xzyx\ldots y$, and in each case we
can repeatedly remove $s_2$ using (\ref{transitive}), until
we arrive at $n=2$ with the sequence $xzy$.
This proves the claim.

A final application of (\ref{transitive}) then gives
$(x^i,y^j)\in E$.
The same argument shows $(y^j,x^i)\in E$.
This proves (\ref{inE}).

Because all pairs of unary projections are agile in
$C(x^i)$, we can apply Theorem~\ref{thm:operation}, whose
cases (a) and (b) prove (IVa) and (IVb).
This proves the theorem.
\end{proof}

\section{The Lower Bound for Irregular Algorithms}

In Section~\ref{sec:lowerbound} we considered deterministic
$M$-regular projective list update algorithms.
In this section, we consider randomized algorithms, which
may select deterministic algorithms that are not $M$-regular.
If this happens sufficiently rarely, the algorithm may still
be competitive.
For example, consider an algorithm that operates according
to some rule (for example \MTF), keeps track of its
incurred costs, and whenever this is a square number $Q^2$,
does not move any item for the next $Q$ requests, and then
resumes its normal operation.
This does not change its competitive ratio, but makes the
algorithm no longer $M$-regular.

In Theorem~\ref{t:Mreg} we showed that no deterministic
$M$-regular projective list update algorithm is better than
$1.6$-competitive.
For this we gave, for any $\eps>0$, a suitable distribution
on request sequences that bound the competitive ratio of the
algorithm from below by $1.6-\varepsilon$.
These request sequences are drawn from a set $\Lambda$
defined in (\ref{e:uniformdist}) with parameters $K,T$ that
are chosen sufficiently large depending on $\varepsilon$.

We extend this analysis to arbitrary randomized projective
list update algorithms using the full characterization from
the previous section.
Part of this extension involves also a sufficiently large
choice of the parameter~$\M$ in (\ref{e:uniformdist}) to
cope with algorithms that are not $\M$-regular.

In brief, the proof works as follows.
Using the crucial notion of a good state $(i,j)$
in Definition~\ref{d-good}, we call a deterministic
algorithm $\M$-regular \emph{in state} $(i,j)$
if it fulfills a certain condition, (\ref{Mreg}) below,
where the algorithm only uses the containers from
Theorem~\ref{t-containers} in the normal way that one
expects from competitive algorithms.
The lower bound from Lemma~\ref{l:good16} applies in
expectation for algorithms that fulfill condition (\ref{Mreg}).

The proof of the following theorem is mostly concerned
with the cases where the deterministic algorithm $\A$ is
``irregular'', that is, condition (\ref{Mreg}) fails.
Here we use the following argument, spelled out in detail
following (\ref{case1}):
We give simple request sequences (which depend on the
growing parameters $K,T,\M$) that have constant offline cost
but arbitrarily large cost for deterministic ``irregular''
algorithms; hence, these deterministic algorithms must be
chosen with vanishing probability.

\begin{theorem}
\label{t-irreg}
Any randomized projective list update algorithm that
accesses a list of at least three items is at best
$1.6$-competitive.
\end{theorem}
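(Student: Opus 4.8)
The plan is to reduce the general case to the $M$-regular case treated in Section~\ref{sec:lowerbound}, by splitting a putative $(1.6-\eps)$-competitive randomized projective algorithm $\A$ into two sub-behaviours depending on whether the ``containers'' of Theorem~\ref{t-containers} are used in the normal, competitive way. First I would fix $\eps>0$ and, aiming for a contradiction, suppose $\A$ is $(1.6-\eps)$-competitive with additive constant $b$. Since $\A$ is a distribution over deterministic projective algorithms, each of which is described by the data $({\cal W}^+,{\cal W}^-,{\cal W}_2,<,C,F)$ of Theorem~\ref{t-containers}, I would introduce, for a deterministic projective algorithm and a state $(i,j)$ as in Definition~\ref{d-good}, the condition
\begin{equation}
\label{Mreg}
C(x^i)<C(y^j)\ \hbox{or}\ C(x^i)=C(y^j)\in{\cal W}^+,\ \hbox{and symmetrically with $x,y$ swapped,}
\end{equation}
which says that at $(i,j)$ the pair $x,y$ is governed either by the total order $<$ with the ``heavier'' projection in front, or by critical requests in the forward case (IVa); I call such an algorithm $\hat M$-regular \emph{in state} $(i,j)$. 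The key observation is that the cost computation in Table~\ref{phitable} of Lemma~\ref{l:good16} used only the critical-request behaviour (IVa)/$f(x^i),f(y^j)$ together with $M$-regularity to force the first request of each $x^\M$ block to cost one; all of this is available verbatim whenever \eqref{Mreg} holds at $(i,j)$, with $f(x^i)=i-F(x^i)$ and the bound $F(x^i)\ge i-M$ coming from $M$-regularity-in-state (or, if a projection lies in a strictly ordered container, the relevant item is simply pinned to one side, which only makes the block costs larger). Hence, in expectation over the deterministic algorithms drawn by $\A$, every good state $(i,j)$ at which $\A$ is $\hat M$-regular contributes at least $16\cdot\prob[\hbox{$\A$ is $\hat M$-regular in }(i,j)]$ to $\sum_\lambda E[\A_\lambda(i,j)]$.

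The second ingredient handles the complementary event. I would show that if there is a constant fraction $\delta>0$ of (state, deterministic-algorithm) pairs on which \eqref{Mreg} fails, then $\A$ is not $(1.6-\eps)$-competitive, using short ad hoc request sequences rather than the big distribution $\Lambda$. Failure of \eqref{Mreg} at $(i,j)$ means either $C(x^i)=C(y^j)\in{\cal W}^-$ (the reversed critical-request behaviour of (IVb)) or the container order is ``wrong'' in the sense that the projection with \emph{more} requests is placed behind the one with fewer. In either situation the algorithm keeps a frequently requested item at the back of a two-item sublist for arbitrarily long, so on a sequence of the form $(x^a y^a)^N$ for suitable $a$ (chosen larger than any threshold relevant to the finitely many containers reachable from $x^0,y^0$ with at most $a$ requests) the algorithm pays $\Theta(N)$ while $\OPT$ pays $O(1)$ per block; averaging over the randomness, an $\Omega(\delta)$ share of the probability mass behaves this badly, pushing the expected ratio above $1.6-\eps$ for $N$ large. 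Since $\A$ was assumed $(1.6-\eps)$-competitive, $\delta$ must in fact be $0$ in the following stronger sense: outside an $o(1)$ fraction of good states, $\A$ is $\hat M$-regular with probability $1-o(1)$.

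With both pieces in hand the proof closes exactly as in Section~\ref{sec:lowerbound}. I would re-run the good-state counting argument from the proof of Theorem~\ref{t:Mreg} --- which is purely combinatorial and does not mention the algorithm --- to conclude that $\Lambda(K,T)$ contains $KH^2T-o(KH^2T)$ good states; by the previous paragraph all but an $o(1)$ fraction of them are states at which $\A$ is $\hat M$-regular with probability $1-o(1)$, and for each such state the first paragraph gives expected cost $\ge 16-o(1)$. Summing yields $\sum_{\lambda\in\Lambda} E[\A(\lambda)]\ge 16KH^2T-o(KH^2T)$, i.e. the analogue of~\eqref{e:yao2} in expectation, and dividing by $\sum_\lambda(\OFF(\lambda)+b)=(10K+2+b)H^2T$ gives an expected ratio $\ge 1.6-\eps$ for $K,T$ large, contradicting $(1.6-\eps)$-competitiveness; letting $\eps\to0$ proves the theorem. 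The main obstacle is the middle step: making precise the dichotomy ``either \eqref{Mreg} holds almost everywhere, or a fixed positive fraction of the probability mass is caught by a cheap bad sequence,'' because one must control the interaction between the (possibly infinitely many) containers and the fixed finite family of test sequences, and must verify that a container placing a heavy projection behind a light one, or an element of ${\cal W}^-$, genuinely forces $\Theta(N)$ cost on some bounded-alphabet sequence of bounded per-block $\OPT$-cost --- i.e. that ``irregular'' container behaviour is actually detectable and expensive, not merely unusual.
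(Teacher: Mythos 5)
Your high-level architecture is the paper's: define a per-state regularity condition, observe that Lemma~\ref{l:good16} then holds in expectation weighted by $\prob(\hbox{regular at }(i,j))$, use the assumed $(1.6-\eps)$-competitiveness on cheap auxiliary sequences to show the irregular states carry $o(KH^2T)$ total probability mass, and re-run the good-state count. (Your side remark that a strictly ordered pair of containers can be folded into the ``good'' side is in fact sound --- if $C(x^i)\ne C(y^j)$ the pair is pinned, so the requests of $\phi$ after a good state cost at least $H=4\M+4\ge16$ --- whereas the paper instead counts all strict orderings as failures and bounds their probability via the sequences $x^iy^Y$.) But two of the genuine failure modes are missing from your case analysis, and they are exactly the delicate ones. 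First, you never treat containers in ${\cal W}_2$, i.e.\ containers holding unary projections of only the two items $x,y$. For such a pair Theorem~\ref{t-containers} imposes \emph{no} critical-request structure: the relative order of $x$ and $y$ may be an arbitrary projective two-item behaviour, so neither Table~\ref{phitable} nor your ``pinned item'' argument applies, and such states need not cost $16$. The paper disposes of this case by invoking a \emph{third} list item $z$: since $C(x^i)$ contains no projection of $z$, the pair $x^i,z^k$ is strictly ordered, and $\prob(C(x^i)=C(y^j)\in{\cal W}_2)\le\prob(C(x^i)<C(z^k))+\prob(C(x^i)>C(z^k))$ is then bounded via sequences $x^iz^Y$. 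This is precisely where the hypothesis ``at least three items'' enters; your proof never uses that hypothesis, which is a structural warning sign.

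Second, your regularity condition omits the bounds $f(x^i)<\M$ and $f(y^j)<\M$, yet the Table~\ref{phitable} computation needs them (e.g.\ the first request of $xy^{\M}$ costs $1$ only because the critical request to $y$ falls inside the immediately preceding block $y^{\M}$, which requires $f(y^j)<\M$). You wave at this with ``the bound $F(x^i)\ge i-M$ coming from $M$-regularity-in-state,'' but a general projective algorithm is not $M$-regular, so you must separately prove that states with stale critical requests are rare. The paper's argument here --- charging $\prob\bigl(f(x^{i+\ell})\ge\M\bigr)$ for $\ell=0,\ldots,\M-1$ to the cost of $\calA$ on $x^iy^jx^{\M}$ and averaging over the window of length $\M$ --- has no counterpart in your proposal. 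Finally, the quantitative form of your dichotomy is too weak: you need $\sum_{i\le X}\sum_{j\le Y}\bigl(1-\prob(r_{ij})\bigr)=o(KH^2T)$ over the entire rectangle of reachable states, and your test sequences $(x^ay^a)^N$ probe only a staircase of states; the paper's sequences $x^iy^Y$, one for each $i\le X$, are chosen exactly so that every column of the rectangle is visited and the competitiveness inequality converts directly into the required $o(\cdot)$ bound.
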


\begin{proof}
Assume the list has at least three items.
Consider a randomized projective algorithm $\calA$ and
assume that $\calA$ is $c$-competitive with $c<1.6$.
That is, there exists a constant $b$ such that
$\calA(\seq)\le c\cdot \OFF(\seq)+b$ for all request sequences $\seq$.

We adapt the proof for $M$-regular algorithms of
Section~\ref{sec:lowerbound}.
Let $\M\ge 3$, consider $\Lambda$ in (\ref{e:uniformdist}) and
consider a good state $(i,j)$ as defined in
Definition~\ref{d-good}.

Let $\A$ be a deterministic projective algorithm.
We say that algorithm $\A$ is {\em $\M$-regular in state
$(i,j)$} if, with ${\cal W}^+$ as in
Theorem~\ref{t-containers} and
$f(x^i)$ defined as in $(\ref{relcrit})$,
\begin{equation}
\label{Mreg}
C(x^i)=C(y^j)\in {\cal W}^+,
\qquad
f(x^i)< \M,
\qquad
f(y^j)< \M.
\end{equation}
It is easy to see that the proof of Lemma~\ref{l:good16}
applies if $\A$ is $\M$-regular in $(i,j)$.

Recall that $\calA$ is just a discrete probability distribution on
the set of deterministic projective algorithms.
Let $r_{ij}$ be the event that $\calA$ is $\M$-regular in
state $(i,j)$.
Analogous to (\ref{yao3}), the expected cost of $\calA$ is
bounded by considering the good states $(i,j)$ as follows:
\begin{equation}
\label{bound}
\begin{array}{rcl}
\displaystyle
E\left[\sum_{\lambda \in \Lambda}\calA(\lambda)\right]
& \ge &
\displaystyle
\sum_{(i,j)\textrm{
good}}E\left[\sum_{\lambda\in\Lambda}\calA_\lambda(i,j)\right]\\
& \ge &
\displaystyle
16KH^2T - o(KH^2T) - \sum_{(i,j)\textrm{ good}} 16(1-\prob(r_{ij})).
\\
\end{array}
\end{equation}

Let
\begin{equation}
\label{XY}
X:=HT+KH+\M
~~~~\hbox{and}~~~~
Y:=(K+1)H+\M,
\end{equation}
and recall that $H$ is a linear function of $\M$ by~(\ref{defH}).
For all good states $(i,j)$ we have, by (\ref{good}),
\begin{equation}
\label{reallygood}
  1\leq i \le X \textrm{ and } 1\leq j \le Y.
\end{equation}
If we can prove that, with growing $\M$, $K$, and $T$,
\begin{eqnarray}
\label{probrij}
  \sum_{(i,j)\textrm{ good}} 16(1-\prob(r_{ij}))
  \le  \sum_{j=1}^{Y}\sum_{i=1}^{X} 16(1-\prob(r_{ij}))
  =o(KH^2T),
\end{eqnarray}
then we have proved (\ref{e:yao2}) for irregular algorithms.

We proceed to prove (\ref{probrij}) by analyzing where
(\ref{Mreg}) fails, that is, for each of the six cases
according to
\begin{eqnarray*}
\sum_{j=1}^{Y}\sum_{i=1}^{X}(1-\prob(r_{ij}))\le
\sum_{j=1}^{Y}\sum_{i=1}^{X}\left(
\begin{array}{l}
\phantom{+}\prob(C(x^i)< C(y^j))\\
+\prob(C(x^i)> C(y^j))\\
+\prob(C(x^i)= C(y^j)\in {\cal W}^- )\\
+\prob(C(x^i)= C(y^j)\in {\cal W}_2 )\\
+\prob(C(x^i)=C(y^j)\in {\cal W}^+,~ 
f(x^i)\ge \M)\\
+\prob(C(x^i)=C(y^j)\in {\cal W}^+,~ 
f(y^j)\ge \M)
\end{array}
\right).
\end{eqnarray*}
We start by proving
\begin{equation}
\label{case1}
\sum_{j=1}^{Y}\sum_{i=1}^{X}\prob(C(x^i)< C(y^j))\le o(KH^2T).
\end{equation}
To this aim, consider the sequence $x^iy^Y$ for $1\le i \le X$.
When serving this sequence, a request to $y$ will be served
in each of the states $(i,1),\ldots, (i,j),\ldots,(i,Y)$.
Since every deterministic algorithm with $C(x^i)< C(y^j)$
pays one unit for accessing $y$ in state $(i,j)$, the
expected cost of $\calA$ for serving a request to $y$ in a
state $(i,j)$ is at least $\prob(C(x^i)< C(y^j))$.
Therefore
\begin{equation}
\label{case1b}
\calA( x^iy^Y)\ge
\sum_{j=1}^{Y} \prob(C(x^i)< C(y^j)).
\end{equation}
On the other hand,
$\calA( x^iy^Y)\le c\cdot\OFF(x^iy^Y)+b$ because
$\calA$ is $c$-competitive.
Since $\OFF(x^iy^Y)=1$  (the initial list state is $[xy]$) it follows that
\begin{equation}
\label{case1c}
\sum_{i=1}^{X}\sum_{j=1}^{Y} \prob\big(C(x^i)< C(y^j)\big)
\le \sum_{i=1}^{X} \calA( x^iy^Y)
\le X \cdot(c+b)=o(KH^2T)
\end{equation}
as desired.

The bound on $\prob(C(x^i)> C(y^j))$ is very similar, using
request sequences of the form $y^jx^X$ for $1\le j\le Y$.

For $\prob(C(x^i)= C(y^j)\in {\cal W^-} )$, we use, like for
(\ref{case1}), request sequences of the form $\seq=x^iy^Y$.
Clearly, from the first request to $y$ onwards, the critical
request to $x$ is always earlier in $\seq$ than the
critical request to~$y$.
Therefore $C(x^i)=C(y^j)\in {\cal W}^-$ implies that $y$ is behind
$x$ in the list, so
\begin{equation*}
\label{case1d}
\calA( x^iy^Y)\ge
\sum_{j=1}^{Y} \prob(C(x^i)= C(y^j)\in {\cal W^-} ),
\end{equation*}
and the same argument as after (\ref{case1b}) applies.

If $C(x^i)= C(y^j)\in {\cal W}_2$, the container
$C(x^i)$ does not contain any unary projections to items
other than $x$ or~$y$.
The list has at least a third item $z$ and either
$C(x^i)<C(z^k)$ or $C(z^k)<C(x^i)$
for any~$k$.
We consider only the first case, where we can bound
$\prob(C(x^i)<C(z^k))$ similarly to~(\ref{case1}).
By considering the request sequence $x^iz^Y$ for $1\le i\le
X$, we obtain in the same way as with (\ref{case1b}) and
(\ref{case1c}) that
$\sum_{i=1}^{X} \calA( x^iz^Y) =o(KH^2T)$.

As explained, if $C(x^i)$ and $C(y^j)$ are two containers in
${\cal W}_2$, then either $C(x^i)< C(z^k)$ or $C(x^i)>C(z^k)$
for all $z^k$ with $z\neq x,y$, so that
\[
\prob(C(x^i)= C(y^j)\in {\cal W}_2 )\le \prob(C(x^i)< C(z^k)) +\prob(C(x^i)> C(z^k)).
\]
Hence the left hand side can be bounded by the bound on the first two cases.

In a similar fashion, we bound $\prob(C(x^i)=C(y^j)\in {\cal
W^+},~f(x^i)\ge \M)$.
First of all, $C(x^i)=C(y^j)\in {\cal
W^+}$ implies that both $x^i$ and
$y^j$ are in the same container and have critical
requests, that is, $F(x^i)>0$ and $F(y^j)>0$, so the
relative requests in (\ref{relcrit}) fulfill
$f(x^i)<i$ and $f(y^j)<j$.
So $f(x^i)\ge \M$ implies $i>\M$, and therefore
\begin{equation}
\label{double}
\begin{array}{rl}
&
\displaystyle
\sum_{j=1}^{Y}\sum_{i=1}^{X}
\prob(C(x^i)=C(y^j)\in {\cal W^+},~f(x^i)\ge \M)\\
{}=&
\displaystyle
\sum_{j=1}^{Y}\sum_{i=\M+1}^{X}
\prob(C(x^i)=C(y^j)\in {\cal W^+},~f(x^i)\ge \M).\\
\end{array}
\end{equation}
Next, we show
\begin{equation}
\label{1overM}
\begin{array}{rl}
&\displaystyle
\sum_{i=\M+1}^{X}
\prob(C(x^i)=C(y^j)\in {\cal W^+},~f(x^i)\ge \M)\\
\le &
\displaystyle
\frac{1}{\M} \sum_{i'=1}^{X}
\sum_{\ell=1}^{\M}\prob(C(x^{i'+\ell})=C(y^j)\in {\cal W^+},~
  f(x^{i'+\ell})\ge \M).
\end{array}
\end{equation}
Namely, for each $i=\M+1,\ldots,X$ there are at least $\M$
choices $i',\ell$ so that $i=i'+ \ell$, be\-cause for each
$\ell=1,\ldots,\M$ the term $i' = i - \ell$ fulfills
$1\le i'\le X$.
This shows~(\ref{1overM}).

Let $1\le j\le Y$ and $1\le i'\le X$ and consider in the request
sequence $x^{i'}y^j x^\M$ (which has constant offline cost) the
last $\M$ requests to~$x$.
If their critical request is before the critical request to $y$
(which exists), they incur online cost one, so
\begin{equation}
\label{xyxM}
\begin{array} {rl}
&
\displaystyle
\sum_{\ell=1}^{\M}\prob(C(x^{i'+\ell})=C(y^j)\in {\cal W^+},~
  f(x^{i'+\ell})\ge \M)\\
\le&
\calA(x^{i'}y^j x^\M)
\le
c\cdot \OFF(x^{i'}y^j x^\M) +b =O(1).
\\
\end{array}
\end{equation}
Consider (\ref{double}), (\ref{1overM}), and (\ref{xyxM})
and note that $Y/\M=O(K)$ by (\ref{XY}) and~(\ref{defH}),
so $Y/\M\cdot X = O(KHT + K^2H)$.
This shows
\[
\sum_{j=1}^{Y}\sum_{i=1}^{X}
 \prob(C(x^i)=C(y^j)\in {\cal W^+},~f(x^i)\ge \M)
\le
Y\frac{1}{\M} X \cdot O(1) =o(KH^2T)
\]
if we let $K,T,\M$ (and thus $H$) grow while keeping $K/T$ constant.

The bound on
$\prob(C(x^i)=C(y^j)\in {\cal W^+},~f(y^j)\ge \M)$
is proved analogously to the previous bound.
\end{proof}

\section{Conclusions}

An open problem is to extend the lower bound to the full
cost model, even though this model is not very natural in
connection with projective algorithms.
This would require request sequences over arbitrarily many
items, and it is not clear whether an approach similar to
the one given here can work.

Another ambitious goal is to further improve the lower bound
in case of non-projective algorithms.
Here, the techniques of the paper do not apply at all, and
to get  improvements that are substantially larger than the
ones obtainable with the methods of \cite{AGS01} requires
new insights.

Finally, the search for good non-projective algorithms has
become an issue with our result.
Irani's SPLIT algorithm \cite{I91} is the only one known of
this kind with a competitive ratio below~2.
A major obstacle for finding such algorithms is the
difficulty of their analysis, because pairwise methods are
not applicable, and other methods (e.g.\ the potential
function method) have not been studied in depth.
We hope that our result can stimulate further research in
this direction.

A first result is a non-projective algorithm for lists of up
to four items based on partial orders which is
1.5-competitive \cite{ASW96}; for another study of
algorithms for short lists see \cite{hagerup07}.
Extending the partial order approach to longer lists is not
straightforward (and has in fact led to the lower bounds of
1.501 for lists of length five in \cite{AGS01} and
\cite[p.~38]{A02}).

{
\bibliographystyle{abbrv}
\bibliography{listupdate}
\vfill
}

\end{document}